\newif\iftods
\renewcommand{\epsilon}{\varepsilon}
\newcommand{\E}{\mathbb{E}}
\newcommand{\Universe}{\mathcal{U}}
\newcommand{\x}{{S}}
\newcommand{\X}{\mathcal{P}(\Universe)^{*}}
\newcommand{\Xsingle}{\Universe^{*}}
\newcommand{\floor}[1]{{\left \lfloor #1 \right \rfloor}}
\newcommand{\ceil}[1]{{\left \lceil #1 \right \rceil}}
\newcommand{\MG}{\mathrm{MG}}
\newcommand{\PMG}{\mathrm{PMG}}
\newcommand{\Laplace}{\mathrm{Laplace}}
\newcommand{\Lap}[1]{\Laplace{\left(#1\right)}}
\newcommand{\merge}{\mathrm{Merge}}
\newcommand{\usersketch}{\mathrm{PAMG}}
\newcommand{\gaussthreshold}{\mathrm{GSHM}}
\newcommand{\w}{w} 
\newcommand{\T}{T} 
\newcommand{\cc}{c} 
\newcommand{\TT}{\T \cap \T'} 
\newcommand{\numstreams}{l}
\newcommand{\thres}{2\ln(3/\delta)/\epsilon}
\newtheorem{theorem}{Theorem}
\newtheorem{corollary}[theorem]{Corollary}
\newtheorem{definition}[theorem]{Definition}
\newtheorem{lemma}[theorem]{Lemma}
\newtheorem{fact}[theorem]{Fact}
\theoremstyle{remark}
\def\blfootnote{\xdef\@thefnmark{}\@footnotetext}
\title[Better Differentially Private Approximate Histograms and Heavy Hitters using the Misra-Gries Sketch]{\texorpdfstring{Better Differentially Private Approximate Histograms and Heavy Hitters using the Misra-Gries Sketch}{Better Differentially Private Approximate Histograms and Heavy Hitters using the Misra-Gries Sketch}}
\author{Christian Janos Lebeda}
\email{christian-janos.lebeda@inria.fr}
\email{christian.j.lebeda@gmail.com}
\affiliation{%
  \institution{Inria, University of Montpellier}
  \city{Montpellier}
  \country{France}
}
\author{Jakub Tětek}
\email{jakub.tetek@insait.ai}
\email{j.tetek@gmail.com}
\affiliation{%
  \institution{INSAIT, Sofia University “St. Kliment Ohridski”}
  \city{Sofia}
  \country{Bulgaria}
}
\keywords{Differential Privacy, Streaming Algorithms, Heavy Hitters}
\begin{document}

\begin{abstract}
    We consider the problem of computing differentially private approximate histograms and heavy hitters in a stream of elements. 
    In the non-private setting, this is often done using the sketch of Misra and Gries [Science of Computer Programming, 1982].
    Chan, Li, Shi, and Xu\ [PETS 2012] 
    describe a differentially private version of the Misra-Gries sketch, but the amount of noise it adds can be large and scales linearly with the size of the sketch; the more accurate the sketch is, the more noise this approach has to add. 
    We present a better mechanism for releasing a Misra-Gries sketch under $(\varepsilon,\delta)$-differential privacy. 
    It adds noise with magnitude independent of the size of the sketch; in fact, the maximum error coming from the noise is the same as the best known in the private non-streaming setting, up to a constant factor. 
    Our mechanism is simple and likely to be practical. 
    We also give a simple post-processing step of the Misra-Gries sketch that does not increase the worst-case error guarantee.
    It is sufficient to add noise to this new sketch with less than twice the magnitude of the non-streaming setting. 
    This improves on the previous result for $\epsilon$-differential privacy where the noise scales linearly to the size of the sketch.
    Finally, we consider a general setting where users can contribute multiple distinct elements. 
    We present a new sketch with maximum error matching the Misra-Gries sketch. 
    For many parameters in this setting our sketch can be released with less noise under $(\varepsilon, \delta)$-differential privacy.
\end{abstract}

\maketitle


\iftods
\else
\section*{Publication}
The conference version of this paper appears in Proceedings of the 42nd ACM SIGMOD-SIGACT-SIGAI Symposium on Principles of Database Systems, PODS 2023 \url{https://doi.org/10.1145/3584372.3588673}. 

A condensed version of the paper is available in the \href{https://sigmodrecord.org/2024/04/07/better-differentially-private-approximate-histograms-and-heavy-hitters-using-the-misra-gries-sketch/}{2024 SIGMOD Research Highlight Awards}.  

The full version is accepted for publication in the ACM Transactions on Database Systems \url{https://doi.org/10.1145/3716375}. 
\fi

\section{Introduction}
\label{sec:introduction}

Computing the histogram of a dataset is one of the most fundamental tasks in data analysis.
At the same time, releasing a histogram may present significant privacy issues. This makes the efficient computation of histograms under privacy constraints a fundamental algorithmic question. Notably, differential privacy has become in recent years the golden standard for privacy, giving formal mathematical privacy guarantees.
It would thus be desirable to have an efficient way of (approximately) computing histograms under differential privacy.

Histograms have been investigated thoroughly in the differentially private setting~\cite{dworkCalibratingNoise, DP-geometric-mechanism, DP-staircase-mechanism, DP-pure-sparse-hist, DP-approx-sparse-hist, DP-finite-computers, DP-ALP}. 
These algorithms start by computing the histogram exactly and they then add noise to ensure privacy. However, in practice, the amount of data is often so large that computing the histogram exactly would be impractical. 
This is, for example, the case when computing the histogram of high-volume streams such as when monitoring computer networks, online users, financial markets, and similar. In that case, we need an efficient streaming algorithm. 
Since the streaming algorithm would only compute the histogram approximately, the above-mentioned approach that first computes the exact histogram is infeasible. 
In practice, non-private approximate histograms are often computed using the Misra-Gries (MG) sketch \cite{Misra-Gries}. The MG sketch of size $k$ returns at most $k$ items and their approximate frequencies $\hat{f}$ such that $\hat{f}(x) \in [f(x) - n/(k+1), f(x)]$ for all elements $x$ where $f(x)$ is the true frequency and $n$ is the length of the stream. 
This error is known to be optimal~\cite{MG-max-error}. 
In this work, we develop a way of releasing a MG sketch in a differentially private way while adding only a small amount of noise. %
This allows us to efficiently and accurately compute approximate histograms in the streaming setting while not violating users' privacy. 
This can then be used to solve the heavy hitters problem in a differentially private way. 
Our result improves upon the work of Chan, Li, Shi, and Xu~\cite{DP-continual-heavy-hitters} who also show a way of privately releasing the MG sketch, but who need a greater amount of noise; we discuss this below.

In general, the issue with making approximation algorithms differentially private is that although we may be approximating a function with low global sensitivity, the algorithm itself (or rather the function it implements) may have a much larger global sensitivity. 
This increases the amount of noise required to achieve privacy using standard techniques.
We get around this issue by exploiting the structure of the difference between the MG sketches for neighboring inputs. 
This allows us to prove that the following simple mechanism 
ensures $(\epsilon,\delta)$-differential privacy: 
(1) We compute the Misra-Gries sketch, 
(2) we add to each counter independently noise distributed as $\Lap{1/\epsilon}$, 
(3) we add to all counters the same value, also distributed as $\Lap{1/\epsilon}$, 
(4) we remove all counters smaller than $1 + \thres$. 
Specifically, we show that this algorithm satisfies the following guarantees: 
\begin{theorem}[Theorem~\ref{thm:final_theorem} simplified]
    The above algorithm is $(\epsilon,\delta)$-differentially private, uses $2k$ words of space, and returns a frequency oracle $\hat{f}$ with maximum error of $n/(k+1) + O(\log (1/\delta)/\epsilon)$ with high probability for $\delta$ being sufficiently small.
\end{theorem}

A construction for a differentially private Misra-Gries sketch has been given before by Chan et al.~\cite{DP-continual-heavy-hitters}. 
However, the more accurate they want their sketch to be (and the bigger it is), their approach has to add \emph{more} noise. 
The reason is that they directly rely on the global $\ell_1$-sensitivity of the sketch.
Specifically, if the sketch has size $k$ (and thus error $n/(k + 1)$ on a stream of $n$ elements), its global sensitivity is $k$, and they thus have to add noise of magnitude $k/\epsilon$. 
Their mechanism ends up with an error of $O\left(k \log (d)/\epsilon\right)$ for $\epsilon$-differential privacy with $d$ being the universe size. 
This can be easily improved to $O\left(k \log (1/\delta)/\epsilon\right)$ for $(\epsilon, \delta)$-differential privacy with a thresholding technique similar to what we do in step $(4)$ of our algorithm above.
This also means that they cannot get more accurate than error $\Theta\left(\sqrt{n \log(1/\delta)/\epsilon}\right)$, no matter what value of $k$ one chooses.
We achieve that the biggest error, as compared to the values from the MG sketch, among all elements is $O(\log(1/\delta)/\epsilon)$ assuming $\delta$ is sufficiently small (we show more detailed bounds including the mean squared errors in Theorem~\ref{thm:final_theorem}). 
This is the same as the best private solution that starts with an exact histogram~\cite{DP-approx-sparse-hist}.
In fact, for any mechanism that outputs at most $k$ heavy hitters there exists input with error at least $n/(k+1)$ in the streaming setting~\cite{MG-max-error} and input with error at least $O(\log(\min(d, 1/\delta))/\epsilon)$ under differential privacy~\cite{DP-finite-computers}.
In Section~\ref{sec:pure} we discuss how to achieve $\epsilon$-differential privacy with error $n/(k+1) + O(\log(d)/\epsilon)$.
Therefore the error of our mechanisms is asymptotically optimal for approximate and pure differential privacy, respectively. 
The techniques used in Section~\ref{sec:pure} could also be used to get approximate differential privacy, but the resulting sketch would not have strong competitiveness guarantees with respect to the non-private Misra-Gries sketch, unlike the sketch that we give in Section~\ref{sec:alg}.

Chan et al.~\cite{DP-continual-heavy-hitters} use their differentially private Misra-Gries sketch as a subroutine for continual observation and combine sketches with an untrusted aggregator.
Those settings are not a focus of our work but our algorithm can replace theirs as the subroutine, leading to better results also for those settings.
However, the error from noise still increases linearly in the number of merges when the aggregator is untrusted.
As a side note, we show that in the case of a trusted aggregator, the approach of \cite{DP-continual-heavy-hitters} can handle merge operations without increasing error.
While that approach adds significantly more noise than ours if we do not merge, it can with this improvement perform better when the number of merges is very large (at least proportional to the sketch size). 

Furthermore, in Section~\ref{sec:user-level} we consider the setting where users contribute multiple elements to the stream. 
The algorithms discussed above can be adjusted for this setting by scaling the noise linearly proportional to the number of elements per user.
This is the optimal scaling for $\varepsilon$-differential privacy or when users can contribute many copies of the same element.
However, under $(\varepsilon, \delta)$-differential privacy in the non-streaming setting the magnitude of noise scales only with the square root of the number of elements if they are distinct~\cite{GaussianSparseHistogramMechanism}.
We give a generalized version of the Misra-Gries sketch with lower sensitivity for this setting.
We achieve less noise for many parameters when users contribute multiple distinct elements.
The properties of our new sketch is listed below.
It remains an open problem to achieve optimal error in this setting under $(\varepsilon, \delta)$-differential privacy.

\begin{theorem}[Lemmas~\ref{lem:l2sens-MG-sketch},~\ref{lem:user-level-sketch-error} and \ref{lem:user-level-sketch-sensitivity} summarized]
    The Privacy-Aware Misra Gries (PAMG) sketch and the MG sketch has the same worst-case error guarantees.
    If each user contributes a set of up to $m$ elements to the stream, then the sensitivity of the MG sketch scales linearly in $m$.
    As such, any differential private mechanism that outputs a MG sketch must add noise with magnitude $\Omega(m)$.
    In contrast, the $\ell_2$-sensitivity of the PAMG sketch is $\sqrt{k}$ independent of the value of $m$.
\end{theorem}

Another approach that can be used is to use a randomized frequency oracle to recover heavy hitters. 
However, it seems hard to do this with the optimal error size. 
In its most basic form \cite[Appendix~D]{DP-Google-shuffled-model}, this approach needs noise of magnitude $\Theta(\log(d)/\epsilon)$, even if we have a sketch with sensitivity 1 (the approach increases the sensitivity to $\log(d)$, necessitating the higher noise magnitude), leading to maximum error at least $\Omega(\log(k) \log(d)/\epsilon)$. 
Bassily, Nissim, Stemmer, and Guha Thakurta~\cite{bassily2017practical} show a more involved approach which reduces the maximum error coming from the noise to $\Theta((\log(k) + \log(d))/\epsilon)$, but at the cost of increasing the error coming from the sketch by a factor of $\log(d)$. 
This means that even if we had a sketch with error $\Theta(n/k)$ and sensitivity 1, neither of these two approaches would lead to optimal guarantees, unlike the algorithm we give in this paper.

\paragraph*{Relation to B{\"{o}}hler and Kerschbaum~\cite{multiparty-misra-gries}} 
Essentially the same result as Theorem~\ref{thm:final_theorem} has been claimed in~\cite{multiparty-misra-gries}. 
However, their approach ignores the discrepancy between the global sensitivity of a function we are approximating and that of the function the algorithm actually computes.
Their mechanism adds noise scaled to the sensitivity of the exact histogram which is $1$ when a user contributes a single element to the stream. 
But as shown by Chan et al.~\cite{DP-continual-heavy-hitters} the sensitivity of the Misra-Gries sketch scales linearly with the number of counters in the sketch. 
The algorithm from \cite{multiparty-misra-gries} thus does not achieve the claimed privacy parameters. 
Moreover, it seems unlikely this could be easily fixed -- not without doing something along the lines of what we do in this paper.
\\\\
See 
\href{https://github.com/JakubTetek/Differentially-Private-Misra-Gries}{https://github.com/JakubTetek/Differentially-Private-Misra-Gries} 
for sample implementations of some of the algorithms we present in this paper.

\section{Technical overview}
\paragraph*{Misra-Gries sketch} Since our approach depends on the properties of the MG sketch, we describe it here. 
Readers familiar with the MG sketch may wish to skip this paragraph. 
We describe the standard version; in Section~\ref{sec:alg} we use a slight modification, but we do not need that here. 

Suppose we receive a sequence of elements from some universe. 
At any time, we will be storing at most $k$ of these elements. 
Each stored item has an associated counter, other elements have implicitly their counter equal to $0$. 
When we process an element, we do one of the following three updates: 
(1) if the element is being stored, increment its counter by $1$, 
(2) if it is not being stored and the number of stored items is $<k$, store the element and set its counter to $1$, 
(3) otherwise decrement all $k$ counters by $1$ and remove those that reach $0$. 
The exact guarantees on the output will not be important now, and we will discuss them in Section~\ref{sec:alg}.

\paragraph*{Main contribution}
We now sketch how to release an MG sketch in a differentially private way. 

Consider two neighboring data streams $S = (S_1, \cdots, S_n)$ and $S' = (S_1, \cdots, S_{i-1}, S_{i+1}, \cdots, S_n)$ for some $i \in [n]$. 
At step $i-1$, the state of the MG sketch on both inputs is exactly the same. 
$MG_S$ then receives the item $S_i$ while $MG_{S'}$ does not. 
This either increments one of the counters of $MG_{S}$ (possibly by adding an element and raising its counter from $0$ to $1$) or decrements all its counters. 
In $\ell_1$ distance, the vector of the counters thus changes by at most $k$. 
One can show by induction that this will stay this way: at any point in time, $\|MG_S - MG_{S'}\|_1 \leq k$. 
By a standard global sensitivity argument, one can achieve pure DP by adding noise of magnitude $k/\epsilon$ to each count. 
This is the approach used in~\cite{DP-continual-heavy-hitters}. 
Similarly, we could achieve $(\epsilon, \delta)$-DP by using the Gaussian mechanism \cite{DworkRothBook} with noise magnitude proportional to the $\ell_2$-sensitivity, which is $\sup_{S,S'} \|MG_S - MG_{S'}\|_2 \leq \sqrt{k}$.
We want to instead achieve noise with magnitude $O(1/\epsilon)$ at each count. To this end, we need to exploit the structure of $MG_S - MG_{S'}$.

What we just described requires that we add the noise to the counts of all items in the universe, also to those that are not stored in the sketch. 
This results in the maximum error of all frequencies depending on the universe's size, which we do not want. 
However, it is known that this can be easily solved under $(\epsilon,\delta)$-differential privacy by only adding noise to the stored items and then removing values smaller than an appropriately chosen threshold~\cite{DP-approx-sparse-hist}. 
This may introduce additional error -- for this reason, we end up with error $O(\log(1/\delta)/\epsilon)$. 
As this is a somewhat standard technique, we ignore this in this section, we assume that the sketches $MG_S$ and $MG_{S'}$ store the same set of elements; the thresholding allows us to remove this assumption, while allowing us to add noise only to the stored items, at the cost of only getting approximate differential privacy.

We now focus on the structure of $MG_S - MG_{S'}$. 
After we add to $MG_S$ the element $S_i$, it either holds $(1)$ that $MG_S - MG_{S'}$ is a vector of all $0$'s and one $1$ or $(2)$ that $MG_S - MG_{S'} = -\mathbf{1}^k$ 
(We use $\mathbf{1}^k$ the denote the dimension $k$ vector of all ones).
We show by induction that this will remain the case as more updates are done to the sketches (note that the remainders of the streams are the same). 
We do not sketch the proof here, as it is quite technical.

How do we use the structure of $MG_S - MG_{S'}$ to our advantage? We add noise twice. 
First, we independently add to each counter noise distributed as $\Lap{1/\epsilon}$. Second, we add to all counters the same value, also distributed as $\Lap{1/\epsilon}$. 
That is, we release $MG_S + \Lap{1/\epsilon}^{\otimes k} + \Lap{1/\epsilon}\mathbf{1}^k$ 
(For $D$ being a distribution, we use $D^{\otimes k}$ to denote the $k$-dimensional distribution consisting of $k$ independent copies of $D$).
Intuitively speaking, the first noise hides the difference between $S$ and $S'$ in case $(1)$ and the second noise hides the difference in case $(2)$. 
We now sketch why this is so for worse constants: $2/\epsilon$ in place of $1/\epsilon$. 
When proving this formally, we use a more technical proof which leads to the better constant.

We now sketch why this is differentially private. 
Let $m_S$ be the mean of the counters in $MG_S$ for $S$ being an input stream. 
We may represent $MG_S$ as $(MG_S - m_S \mathbf{1}, m_S)$ (note that there is a bijection between this representation and the original sketch).
We now argue that the $\ell_1$-sensitivity of this representation is $< 2$ (treating the representation as a $(k+1)$-dimensional vector for the sake of computing the $\ell_1$ distances). 
Consider the first case. In that case, the averages $m_S,m_{S'}$ differ by $1/k$. 
As such, $MG_S - m_S \mathbf{1}^k$ and $MG_{S'} - m_{S'} \mathbf{1}^k$ differ by $1/k$ at $k-1$ coordinates and by $1-1/k$ at one coordinate. 
The overall $\ell_1$ change of the representation is thus 
\[
(k-1)\cdot \frac{1}{k} + (1-1/k)  + 1/k = 2-1/k < 2 \,.
\]
Consider now the second case when $MG_{S} - MG_{S'} = -\mathbf{1}^k$. 
Thus, $MG_S - m_S = MG_S' - m_{S'}$. 
At the same time $|m_S - m_{S'}|=1$. 
This means that the $\ell_1$ distance between the representations is $1$. 
Overall, the $\ell_1$-sensitivity of this representation is $< 2$.

This means that adding noise from $\Lap{2/\epsilon}^{\otimes k+1}$ to this representation of $MG_S$ satisfies $\epsilon$-DP. 
The resulting value after adding the noise is $(MG_S - m_S \mathbf{1}^k + \Lap{2/\epsilon}^{\otimes k}, m_S + \Lap{2/\epsilon})$. 
In the original vector representation of $MG_S$, this corresponds to $MG_S + \Lap{2/\epsilon}^{\otimes k} + \Lap{2/\epsilon}\mathbf{1}^k$ and, by post-processing, releasing this value is also differentially private. 
But this is exactly the value we wanted to show is differentially private! 

\paragraph*{Other settings}
Throughout this paper, we consider the problem of privately releasing a MG sketch in several settings.
As discussed above, we introduce a novel differentially private mechanism in Section~\ref{sec:alg} designed specifically for the sensitivity structure of the Misra-Gries sketch.
For the settings in Sections~\ref{sec:pure}, \ref{sec:merge}, and \ref{sec:user-level}, we instead use various other techniques that bounds the sensitivity.
We then apply general mechanisms from the differential privacy literature to release a sketch.
We briefly discuss our technical approach to each setting below.

In Section~\ref{sec:pure}, we consider the more restrictive definition of pure differential privacy ($\varepsilon$-DP).
The mechanism we discussed above does not satisfy $\varepsilon$-DP because it depends on the thresholding technique that hides the difference in stored keys between neighboring sketches with probability at least $1 - \delta$.
Instead, we introduce a simple post-processing procedure for a MG sketch.
We show that this post-processing adds additional error of at most $n/(k + 1)$ and reduces the $\ell_1$-sensitivity from $k$ to $2$.
This allows us to privately release the post-processed sketch by adding noise of magnitude independent of the sketch size.

In Section~\ref{sec:merge}, we consider the task of merging MG sketches from several streams, which is a common operation in practice.
There, we bound the sensitivity of the merging algorithm introduced by~\cite{mergeable-summaries} by showing that neighboring merged sketches differ by at most $1$ in each counter.

Finally, in Section~\ref{sec:user-level}, we consider a generalized setting where users contribute a set of up to $m$ elements to the stream.
The sensitivity of an MG sketch scales linearly with $m$ in this setting, which is undesirable under $(\varepsilon, \delta)$-DP, since it is sufficient to scale noise only with $\sqrt{m}$ in the non-streaming setting.
To address this challenge, we introduce a new sketch inspired by the MG sketch and designed with privacy in mind.
The key difference between our sketch and the standard MG sketch is that we decrement counters at most once per user rather than once per element.
We show that our sketch has the same worst-case error guarantees as the MG sketch and that neighboring sketches differ by at most $1$ in each counter, similar to the result in Section~\ref{sec:merge}.
In fact, our sketch can be seen as repeatedly running the merging algorithm of~\cite{mergeable-summaries}.
Our sketch can be released with less noise than the MG sketch for many parameters under $(\varepsilon, \delta)$-DP.
However, the error is still not optimal. It remains an open problem whether it is possible to design a sketch with noise of the same magnitude as the non-streaming setting and error guarantees matching the MG sketch.

\section{Preliminaries}
\label{sec:prelim-dpmg}

\paragraph*{Setup of this paper}
We use $\mathcal{U}$ to denote a universe of elements.  
We assume that $\mathcal{U}$ is a totally ordered set of size $d$.  
That is, $\mathcal{U}=[d]$ where $[d]=\{1,\ldots,d\}$.  
We write $\mathcal{P}(\Universe)$ to denote the power set of $\Universe$, and $\X$ to denote the set of all streams with finite length where each element of the stream is from $\mathcal{P}(\Universe)$.  
Given a stream $\x \in \X$ we want to estimate the frequency in $\x$ of each element of $\mathcal{U}$.   
We use $f(x) = \sum_{i = 1}^{\vert \x \vert} \mathbbm{1}[x \in \x_i]$ for any $x \in \mathcal{U}$ to denote the true frequency of $x$ in the stream $\x$, where $\mathbbm{1}[x \in \x_i]$ equals $1$ if $x \in \x_i$ and $0$ otherwise.  
In general, we have that each element in the stream is a set $\x_i \subseteq \mathcal{U}$, but all our results expect for Section~\ref{sec:user-level} are for the common setting where $\vert \x_i \vert = 1$.  
For simplicity of presentation, in this setting we denote the input as a stream of elements, that is $\Xsingle$, rather than a stream of sets each containing a single element.
Our algorithms output a set $T \subseteq \mathcal{U}$ of keys and a frequency estimate $c_i$ for all $i \in T$.  
The value $c_j$ is implicitly 0 for any $j \notin T$.
Our goal is to minimize the largest error between $c_x$ and $f(x)$ among all $x \in \Universe$.

\paragraph*{Differential privacy}
Differential privacy is a rigorous definition for describing the privacy loss of a randomized mechanism introduced by Dwork, McSherry, Nissim, and Smith~\cite{dworkCalibratingNoise}. 
Intuitively, differential privacy protects privacy by restricting how much the output distribution can change when replacing the input from one individual.
This is captured by the definition of neighboring datasets.  
We use the add-remove neighborhood definition for differential privacy.

\begin{definition}[Neighboring Streams]
\label{def:neighborhood}
    Let $\x$ be a stream of length $n$. 
    Streams $\x$ and $\x'$ are neighboring denoted $\x \sim \x'$ if there exists an $i \in [n + 1]$ such that $\x=(\x'_1,\ldots,\x'_{i-1},\x'_{i+1},\ldots,\x'_{n+1})$ or $\x'=(\x_1,\ldots,\x_{i-1},\x_{i+1},\ldots,\x_{n})$.
\end{definition}

\begin{definition}[Differential Privacy~\cite{DworkRothBook}]
\label{def:differential-privacy}
    A randomized mechanism $\mathcal{M}: \X \rightarrow \mathcal{R}$ satisfies $(\epsilon,\delta)$-differential privacy if and only if for all pairs of neighboring streams $\x \sim \x'$ and all measurable sets of outputs $Z \subseteq \mathcal{R}$ it holds that
    \[ \Pr[\mathcal{M}(\x) \in Z] \leq e^\epsilon \Pr[\mathcal{M}(\x') \in Z] + \delta \, . \]
\end{definition}


Samples from a Laplace distribution are used in many differentially private algorithms, most notably the Laplace mechanism~\cite{dworkCalibratingNoise}.
We write $\Lap{b}$ to denote a random variable with a Laplace distribution with scale $b$ centered around 0. 
We sometimes abuse notation and write $\Lap{b}$ to denote the value of a random variable drawn from the distribution. 
Our mechanism also works with other noise distributions. 
We briefly discuss this in Section~\ref{sec:practitioners}.

\begin{definition} [Laplace distribution] \label{def:lappdfcdf-dpmg}
  The probability density and cumulative distribution functions of the Laplace distribution centered around 0 with scale parameter $b$ are
  $f_b(x) = \frac{1}{2b}e^{-|x|/b}$, and
  $\Pr[\Lap{b} \leq x] = \frac{1}{2}e^{x/b}$ if $x < 0$ and $1 - \frac{1}{2}e^{-x/b}$ for $x \geq 0$.
\end{definition}

The sensitivity of a deterministic function taking a stream as input restricts the distance between the outputs for neighboring streams. 
We use the term sensitivity to describe any predicate that holds for the outputs for all pairs of neighboring streams. 
Two commonly used sensitivities in differential privacy are the $\ell_1$/$\ell_2$-sensitivities.
They are special cases of $\ell_p$-sensitivity which bounds the distance between outputs for neighboring streams in the $\ell_p$-norm.

\begin{definition} [$\ell_p$-sensitivity] \label{def:l-sensitivities}
  Let $g \colon \X \rightarrow \mathbb{R}^d$ be a deterministic function mapping a stream to a vector of real values.
  The $\ell_p$-sensitivity of $g$ for any $p \geq 1$ is 
  \[
  \Delta_p = \max_{\x \sim \x'} \| g(\x) - g(\x') \|_p \enspace ,
  \]
  where $\| g(\x) - g(\x') \|_p$ is the $\ell_p$-distance between $g(\x)$ and $g(\x')$ defined as
  \[
    \| g(\x) - g(\x') \|_p = \left({\sum_{i = 1}^d \vert g(\x) - g(\x') \vert^p } \right)^{1/p} \, .
  \]
\end{definition}

\section{Related work}
\label{sec:related}

Chan et al.~\cite{DP-continual-heavy-hitters} show that the global $\ell_1$-sensitivity of a Misra-Gries sketch is $\Delta_1=k$. 
(They actually show that the sensitivity is $k+1$ but they use a different definition of neighboring datasets that assumes $n$ is known. 
Applying their techniques under our definition yields sensitivity $k$.) 
They achieve privacy by adding 
noise with scale $k/\epsilon$ to all elements in the universe and keep the top-$k$ noisy counts. 
This gives an expected maximum error of $O(k\log(d)/\epsilon)$ with $\epsilon$-DP for $d$ being the universe size.
They use the algorithm as a subroutine for continual observation and merge sketches with an untrusted aggregator.
Those settings are not a focus of our work but our algorithm can replace theirs as the subroutine. 

Böhler and Kerschbaum~\cite{multiparty-misra-gries} worked on differentially private heavy hitters with no trusted server by using secure multi-party computation.
One of their algorithms adds noise to the counters of a Misra-Gries sketch. 
They avoid adding noise to all elements in the universe by removing noisy counts below a threshold which adds an error of $O(\log(1/\delta)/\epsilon)$.
This is a useful technique for hiding differences in keys between neighboring sketches that removes the dependency on $d$ in the error.
Unfortunately, as stated in the introduction their mechanism uses the wrong sensitivity.
The $\ell_1$-sensitivity of the sketch is $k$. 
If the magnitude of noise and the threshold are increased accordingly the error of their approach is $O(k \log(k/\delta) / \epsilon)$. 

If we ignore the memory restriction in the streaming setting, the problem is the same as the top-$k$ problem \cite{mir2011pan,durfee2019practical,carvalho2020differentially,qiao2021oneshot}.
The problem we solve can also be seen as a generalization of the sparse histogram problem. 
This has been investigated in~\cite{DP-pure-sparse-hist, DP-approx-sparse-hist, DP-finite-computers, DP-ALP}. 
Notably, Balcer and Vadhan~\cite{DP-finite-computers} provides a lower bound showing that for any $(\epsilon,\delta)$-differentially private mechanism that outputs at most $k$ counters, there exists input such that the expected error for some elements is $\Omega(\min(\log(d/k)/\epsilon, \log(1/\delta)/\epsilon,n))$ (assuming $\epsilon^2 > \delta$). 
The noise that we add in our main contribution in fact matches the second branch of the $\min$ over all elements.

A closely related problem is that of implementing frequency oracles in the streaming setting under differential privacy. 
This has been studied in e.g. \cite{DP-linear-sketches, DP-improved-count-sketch, DP-Google-shuffled-model}.
These approaches do not directly return the heavy hitters.
The simplest approach for finding the heavy hitters is to iterate over the universe which might be infeasible.
However, there are constructions for finding heavy hitters with frequency oracles more efficiently (see Bassily et al.~\cite{bassily2017practical}). 
However, as we discussed in the introduction, the approach of \cite{bassily2017practical} leads to worse maximum error than what we get unless the sketch size is very large and the universe size is small. 

The heavy hitters problem has also received a lot of attention in local differential privacy, starting with the paper introducing the RAPPOR mechanism \cite{erlingsson2014rappor} and continuing with \cite{qin2016heavy,zhao2022efficient,wang2019locally,bun2019heavy,wu2022asymptotically,bassily2017practical}. 
This problem is practically relevant, it is used for example by Apple to find commonly used emojis \cite{apple}. 
The problem has also been recently investigated when using cryptographic primitives \cite{zhu2020federated}.

\cite{blocki2022make,tvetek2022additive} have recently given general frameworks for designing differentially private approximation algorithms; however, if used naively, they are not very efficient for releasing multiple values (not more efficient than using composition) and they are thus not suitable for the heavy hitters problem.

In Section~\ref{sec:user-level} we consider a more general setting where each item in the stream is a set of size at most $m$ instead of a single element. 
Both \cite{DP-continual-heavy-hitters} and \cite{multiparty-misra-gries} also consider the setting where multiple elements differ between neighboring sketches and they achieve privacy by scaling the noise linearly to $m$. 
This scaling is typically used in mechanisms based on Laplace noise and is required for $\varepsilon$-DP even if we ignore the memory restriction of the streaming setting.
However, when the items are distinct we can instead add noise from the normal distribution. 
The magnitude of noise scales only with the square root of the number of differing counts. 
We use such a mechanism in Section~\ref{sec:user-level} based on work by Karrer, Kifer, Wilkins, and Zhang ~\cite{GaussianSparseHistogramMechanism}.
We discuss their results further in Theorem~\ref{thm:sparse-gaussian}.

\section{\texorpdfstring{Differentially Private Misra-Gries Sketch}{Differentially Private Misra-Gries Sketch}}
\label{sec:alg}

In this section, we present our algorithm for privately releasing Misra-Gries sketches.
We first present our variant of the non-private Misra-Gries sketch in Algorithm~\ref{alg:MG} and later show how we add noise to achieve $(\epsilon, \delta)$-differential privacy.
The algorithm we use differs slightly from most implementations of MG in that we do not remove elements that have weight $0$ until we need to re-use the counter. 
This will allow us to achieve privacy with slightly lower error. 

At all times, $k$ counters are stored as key-value pairs.
We initialize the sketch with dummy keys that are not part of $\Universe$.
This guarantees that we never output any elements that are not part of the stream, assuming we remove the dummy counters as post-processing.

The algorithm processes the elements of the stream one at a time. 
At each step one of three updates is performed:
(1) If the next element of the stream is already stored the counter is incremented by $1$. 
(2) If there is no counter for the element and all $k$ counters have a value of at least $1$ they are all decremented by $1$.
(3) Otherwise, one of the elements with a count of zero is replaced by the new element. 

In case (3) we always remove the smallest element with a count of zero. 
This allows us to limit the number of keys that differ between sketches for neighboring streams as shown in Lemma~\ref{lem:nb-cases}. 
The choice of removing the minimum element is arbitrary but the order of removal must be independent of the stream so that it is consistent between neighboring datasets.
The limit on differing keys allows us to obtain a slightly lower error for our private mechanism.
However, it is still possible to apply our mechanism with standard implementations of MG. 
We discuss this in Section~\ref{sec:MG-without-ordering}.

{
\begin{figure}[H]
\centering
\begin{minipage}{.85\linewidth}

\begin{algorithm}[H]
\caption{Misra-Gries~(MG) \label{alg:MG}}
\SetKwInOut{Input}{Input}

\Input{Positive integer $k$ and stream $\x \in \Xsingle$}

$T \leftarrow \{d + 1, \ldots, d + k\}$ \Comment{\makebox[4.85cm][l]{Start with $k$ dummy counters}} \\ 
$c_i \leftarrow 0$ for all $i \in T$ \\

\ForEach{$x \in \x$}
{
  \uIf(\Comment{\makebox[4.85cm][l]{Branch 1 - increment count}}){$x \in T$}{ 
    $c_x \leftarrow c_x + 1$ 
  }
  \uElseIf(\Comment{\makebox[4.85cm][l]{Branch 2 - decrement all counters}}){$c_i \geq 1$ for all $i \in T$}{
    $c_i \leftarrow c_i - 1$ for all $i \in T$ 
  }
  \Else(\Comment{\makebox[4.85cm][l]{Branch 3 - replace zero counter}}){ \label{branch3}
    Let $y \in T$ be the smallest key satisfying $c_y = 0$ \\
    $T \leftarrow \left( T \cup \{x\} \right) \setminus \{y\}$ \\
    $c_x \leftarrow 1$
  }
}

\Return{$T, c$}
\end{algorithm}

\end{minipage}
\end{figure}
}

The same guarantees about correctness hold for our version of the MG sketch, as for the original version. 
This can be easily shown, as the original version only differs in that it immediately removes any key whose counter is zero. 
Since the counters for items not in the sketch are implicitly zero, one can see by induction that the estimated frequencies by our version are exactly the same as those in the original version. 
We still need this modified version, as the set of keys it stores is different from the original version, which we use below. 
The fact that the returned estimates are the same however allows us to use the following fact

\begin{fact}[Bose, Kranakis, Morin, and Tang~\cite{MG-max-error}] \label{fact:MG-error}
Let $\hat{f}(x)$ be the frequency estimates given by an MG sketch of size $k$ for $n$ being the input size. 
Then it holds that $\hat{f}(x) \in [f(x) - n/(k+1), f(x)]$ for all $x \in \mathcal{U}$, where $f(x)$ is the true frequency of $x$ in $\x$.
\end{fact}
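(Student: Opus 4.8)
The plan is to reduce to the classical Misra--Gries guarantee rather than re-derive it. As noted in the paragraph preceding the statement, Algorithm~\ref{alg:MG} differs from the textbook sketch only in that it defers removing a key whose counter has reached $0$ until Branch~3 actually needs the slot; since a key not in $T$ has implicit count $0$, a straightforward induction on the stream shows that the estimate $\hat f(x)$ produced for every $x\in\U$ coincides with that of the standard sketch run on the same stream. Concretely, I would carry the invariant ``after any prefix, the two runs assign the same estimate to every universe element'' through the three branches: Branches~1 and~2 preserve it trivially, and in Branch~3 the removed key has estimate $0$ in both runs (ours keeps the zero counter, the standard one has already dropped it), so no estimate changes. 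Given the invariant, the bound of \citet{MG-max-error} for the standard sketch is exactly the claim.

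For self-containedness I would also include the short direct argument. The upper bound $\hat f(x)\le f(x)$ is immediate: $c_x$ is incremented only in Branch~1 and Branch~3, and in both the current stream element is $x$, so $c_x$ never exceeds the number of occurrences of $x$ processed so far. For the lower bound, let $d$ be the number of Branch~2 executions over the whole stream. Every stream element causes exactly one increment of exactly one counter except the $d$ handled by Branch~2, so there are $n-d$ increments in total; every Branch~2 execution lowers the sum of all counters by $k$ and no other branch lowers it (Branch~3 removes a counter equal to $0$); since this sum is always nonnegative and starts at $0$, we get $n-d-dk\ge 0$, hence $d\le n/(k+1)$.

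It then remains to show $\hat f(x)\ge f(x)-d$, which I would do by charging each missing unit of $c_x$ to a distinct Branch~2 round. Writing $c_x$ for the (possibly implicit) final count of $x$, we have $c_x = f(x) - L_1 - L_2$, where $L_1$ is the number of occurrences of $x$ processed by Branch~2 (skipped increments) and $L_2$ is the number of Branch~2 rounds in which $x\in T$ (decrements of $c_x$). In any Branch~2 round the current element lies outside $T$: if it equals $x$ the round contributes to $L_1$ but not to $L_2$ (since then $x\notin T$), and otherwise it contributes nothing to $L_1$; hence each Branch~2 round contributes at most $1$ to $L_1+L_2$, giving $L_1+L_2\le d$ and thus $c_x\ge f(x)-d\ge f(x)-n/(k+1)$ (when $x\notin T$ at the end this reads $0 = f(x)-L_1-L_2$, which is consistent since then $f(x)=L_1+L_2\le d$).

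The one place to be careful — and essentially the only real content in the direct argument — is this last charging step: bounding skipped increments and decrements of $c_x$ separately would lose a factor of $2$, and the fix is precisely the observation that within a single Branch~2 round the current element cannot both equal $x$ and have $x\in T$, so the two kinds of loss never co-occur. Everything else is bookkeeping, and the reduction-to-known-result route sidesteps even that, at the cost of an induction verifying the estimate-equivalence invariant.
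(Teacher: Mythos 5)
Your proposal is correct. Its first paragraph is exactly the paper's route: the paper does not reprove the bound but cites \citet{MG-max-error} and, in the paragraph preceding Fact~\ref{fact:MG-error}, argues by induction that the variant in Algorithm~\ref{alg:MG} (which keeps zero counters until the slot is reused) produces the same estimates as the standard sketch, since keys outside $T$ implicitly have count $0$. What you add beyond the paper is the self-contained derivation, and it checks out: the potential argument (sum of counters gains $1$ per non--Branch-2 element, loses $k$ per Branch-2 execution, stays nonnegative) gives the bound on the number of decrement rounds, and your charging step is the right one --- in a decrement round the current element lies outside $T$, so a skipped increment of $x$ and a decrement of $c_x$ cannot happen in the same round, which is precisely what avoids the factor-$2$ loss and yields $c_x \geq f(x) - n/(k+1)$; the edge case $x \notin T$ at the end is handled consistently. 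The direct argument buys independence from the external reference (and implicitly re-proves the $\lfloor n/(k+1)\rfloor$ bound on decrement rounds that Section~\ref{sec:pure} later uses), at the cost of the bookkeeping the paper sidesteps by citation. One cosmetic point: rename your count of Branch-2 executions, since $d$ already denotes the universe size in the paper (the paper itself uses $\gamma$ for this quantity in Section~\ref{sec:pure}).
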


Note that this is optimal for any mechanism that returns a set of at most $k$ elements. 
This is easy to see for an input stream that contains $k + 1$ distinct elements each with frequency $n/(k+1)$ since at least one element must be removed.

We now analyze the value of $MG_S - MG_{S'}$ for $S, S'$ being neighboring inputs (recall Definition~\ref{def:neighborhood}). 
We will then use this in order to prove privacy.
As mentioned in Section~\ref{sec:related}, Chan et al.~\cite{DP-continual-heavy-hitters} showed that the $\ell_1$-sensitivity for Misra-Gries sketches is $k$.
They show that this holds after processing the element that differs for neighboring streams and use induction to show that it holds for the remaining stream.
Our analysis follows a similar structure.
We expand on their result by showing that the sets of stored elements for neighboring inputs differ by at most two elements when using our variant of Misra-Gries. 
We then show how all this can be used to get differential privacy with only a small amount of noise.

\begin{lemma} \label{lem:nb-cases}
    Let $\T,\cc \leftarrow \MG(k, \x)$ and $\T',\cc' \leftarrow \MG(k, \x')$ be the outputs of Algorithm~\ref{alg:MG} on a pair of neighboring streams $S \sim S'$ such that $S'$ is obtained by removing an element from $\x$.
    Then $\vert T \cap T' \vert \geq k - 2$ and all counters not in the intersection have a value of at most $1$.
    Moreover, it holds that either (1) $\cc_i = \cc'_i - 1$ for all $i \in T'$ and $\cc_j = 0$ for all $j \notin T'$ or (2) there exists an $i \in T$ such that $\cc_i = \cc'_i + 1$ and $\cc_j = \cc'_j$ for all $j \neq i$.    
\end{lemma}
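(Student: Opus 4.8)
The plan is to strengthen the statement slightly and prove it by induction on the number of stream elements processed after the point where the two streams diverge. Write $S = (S_1,\dots,S_n)$ and $S' = (S_1,\dots,S_{i-1},S_{i+1},\dots,S_n)$, so that $S$ contains one extra element $S_i$. After both runs of Algorithm~\ref{alg:MG} process the common prefix $S_1,\dots,S_{i-1}$ their states coincide exactly; I then track the pair of states as the run on $S$ processes $S_i$ (the run on $S'$ does not) and as both runs process the common suffix $S_{i+1},\dots,S_n$. The invariant I carry is precisely the conjunction asserted by the lemma: $|T\cap T'|\ge k-2$, every counter supported on a key outside $T\cap T'$ has value at most $1$, and the counter vectors are in one of the two ``shapes'' (1) or (2), where a run is allowed to switch between the shapes as more elements arrive. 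This mirrors the induction of \citet{DP-continual-heavy-hitters} for the $\ell_1$-sensitivity, but additionally bookkeeps the key sets.

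For the base case I look at the state immediately after the run on $S$ processes $S_i$, and case on which of the three branches of Algorithm~\ref{alg:MG} is taken. Branch~1 increments the single counter $c_{S_i}$, yielding shape (2) with special coordinate $S_i$ and $T=T'$. Branch~2 decrements every counter, yielding shape (1) with $T=T'$ (all the decremented counters were $\ge 1$, so the resulting counts are still nonnegative). Branch~3 evicts the smallest zero-count key $y$ and inserts $S_i$ with count $1$; this gives shape (2) with special coordinate $S_i$, now with $|T\cap T'| = k-1$, the two symmetric-difference keys being $S_i$ (count $1$) and $y$ (count $0$). In all three cases the invariant holds.

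For the inductive step, assume the invariant and let $x$ be the next element, processed by both runs. I case on the current shape, on the position of $x$ relative to $T$ and $T'$ (in $T\cap T'$, in exactly one of the sets, or in neither), and on the branch taken by each run. The main obstacle is to show that the two runs take \emph{compatible} branches: whenever one run decrements all counters (Branch~2) the other must either do the same or perform a consistent insertion, and the Branch~3 eviction choices of the two runs must track each other. This is exactly where I use the shape invariant --- which forces the two counter vectors to differ either by $\mathbf{1}^k$ on $T'$ or by $+1$ at a single coordinate --- together with the fact that all counters outside $T\cap T'$ are at most $1$ and the deterministic ``evict the smallest zero-count key'' rule, so that the eviction decisions agree up to a single controlled key. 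In each sub-case I then verify that the counter relationship is restored to shape (1) or shape (2) --- it genuinely can flip, e.g.\ when the run on $S$ performs a global decrement that the run on $S'$ does not, turning a ``$+1$ at one coordinate'' situation into a ``$\mathbf{1}$ on $T'$'' situation --- and that the key sets' symmetric difference stays small enough to keep $|T\cap T'|\ge k-2$ with out-of-intersection counts at most $1$. Since $S_{i+1},\dots,S_n$ is common to both runs, iterating the step over the whole suffix establishes the lemma.
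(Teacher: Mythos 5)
Your overall strategy is the same as the paper's (induction over the common suffix after the divergence point, with a case analysis over which branch each run takes), but the induction hypothesis you propose --- ``precisely the conjunction asserted by the lemma'' --- is not an inductive invariant, and the step where you claim that the shape invariant plus the ``evict the smallest zero-count key'' rule forces the two evictions to ``agree up to a single controlled key'' is exactly where it breaks. Concretely, consider sketches with $T=\{1,2,3,4\}\cup R$ and $T'=\{1,2,5,6\}\cup R$ for a common set $R$ of $k-4$ keys carrying equal, large counts in both sketches, with $c_1=1$, $c'_1=0$, $c_2=c'_2=0$, $c_3=c_4=0$, $c'_5=c'_6=0$. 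This pair satisfies your invariant: $\vert T\cap T'\vert=k-2$, every counter outside the intersection is at most $1$, and shape (2) holds with the $+1$ discrepancy at key $1\in T$. Now let the next common element be some $x\notin T\cup T'$. Both runs take Branch 3; the run holding $T$ evicts key $2$ (its smallest zero-count key), while the run holding $T'$ evicts key $1$. Afterwards the private keys are $\{1,3,4\}$ versus $\{2,5,6\}$, so the intersection has size $k-3$ and your invariant is violated. Hence the implication ``invariant before the update implies invariant after the update'' is false as stated, and your induction cannot close: the configuration above happens to be unreachable, but nothing in your hypothesis excludes it, so your inductive step has no way to rule it out.

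The missing ingredient is precisely the substance of the paper's proof: the invariant must be strengthened to a short list of explicitly described states (the paper's S1--S6) that record, beyond the shape of $c-c'$, exactly which private keys each sketch holds and their exact counts (e.g.\ that in the state with two private keys per sketch, the $+1$ discrepancy sits on a key private to $T$ with count exactly $1$ and all other private keys have count $0$), and --- crucially --- ordering information: in that state one of the two keys private to $T'$ is the \emph{minimum} zero-count key of that sketch. It is this last condition that makes the Branch-3 evictions controlled, forcing the sketch for $S'$ to evict one of its private keys rather than an arbitrary shared zero-count key, which is exactly what goes wrong in the counterexample above. So before the case analysis you sketch can be carried out, you need this extra bookkeeping and the verification that each of the (many) branch combinations maps the strengthened states into one another.
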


\begin{proof}
    Let $\x \sim \x'$ be a pair of neighboring streams where $\x'$ is obtained by removing one element from $\x$. 
    We show inductively that the Lemma holds for any such $\x$ and $\x'$.
    Let $\w=\T-\T'$ and $\w'=\T'-\T$ denote the set of keys that are only in one sketch. 
    Let $\cc_0$ and $\cc'_0$ denote the smallest element with a zero count in the respective sketch when such an element exists.
    Then at any point during execution after processing the element removed from $\x$ the sketches are in one of the following states:
    
    \begin{enumerate}
        \item [(S1)] $\T = \T'$ and $\cc_i = \cc'_i - 1$ for all $i \in \T$.
        \item [(S2)] There exist $x_1, x_2 \in \Universe$ such that $\w=\{ x_1 \}$ and $\w' = \{x_2\}$, $\cc_{x_1} = 0$, $\cc'_{x_2} = 1$ and $\cc_i = \cc'_i - 1$ for all $i \in \TT$.
        \item [(S3)] $\T = \T'$ and there exists $x_1 \in \Universe$ such that $\cc_{x_1} = \cc'_{x_1} + 1$ and $\cc_{i} = \cc'_{i}$ for all $i \in \T \setminus \{ x_1 \}$.
        \item [(S4)] There exist $x_1, x_2 \in \Universe$ such that $\w=\{ x_1 \}$ and $\w' = \{x_2\}$, $\cc_{x_1} = 1$, $\cc'_{x_2} = 0$ and $\cc_{i}=\cc'_i$ for all $i \in \TT$.
        \item [(S5)] There exist $x_1, x_2, x_3 \in \Universe$ such that $\cc_{x_1} = \cc'_{x_1} + 1$, $\w=\{x_2\}$, $\w' = \{x_3\}$, $\cc_{x_2}=0$, $\cc'_{x_3}=0$ and $\cc_{i}=\cc'_i$ for all $i \in \TT \setminus \{ x_1 \}$.
        \item [(S6)] There exist $x_1, x_2, x_3, x_4 \in \Universe$ such that $\w=\{ x_1, x_2 \}$ and $\w' = \{x_3, x_4\}$, $\cc_{x_1} = 1$, $\cc_{x_2}=\cc'_{x_3}=\cc'_{x_4}=0$, $\cc_{i}=\cc'_i$ for all $i \in \TT$ and $x_4=\cc'_0$.
    \end{enumerate}

    Let $x=\x_i$ be the element in stream $\x$ which is not in stream $\x'$.
    Since the streams are identical in the first $i - 1$ steps the sketches are clearly the same before step $i$.
    If there is a counter for $x$ in the sketch we execute Branch 1 and the result corresponds to state S3.
    If there is no counter for $x$ and no zero counters we execute Branch 2 and the result corresponds to state S1.
    Otherwise, the 3rd branch of the algorithm is executed and $\cc_0$ is replaced by $x$ which corresponds to state S4.
    Therefore we must be in one of the states S1, S3, or S4 for $T, c \leftarrow \MG(k, (\x_1, \ldots, \x_i))$ and $T', c' \leftarrow \MG(k, (\x'_1, \ldots, \x'_{i-1}))$.

    We can then prove inductively that the Lemma holds since the streams are identical for the elements $(\x_{i+1},\ldots, \x_n)$.
    We have to consider the possibility of each of the branches being executed for both sketches.
    The simplest case is when the element has a counter in both sketches and Branch 1 is executed on both inputs. 
    This might happen in all states and we stay in the same state after processing the element. 
    But many other cases lead to new states.
    
    Below we systematically consider all outcomes of processing an element $x \in \Universe$ when the sketches start in each  of the above states.
    When processing each element, one of the three branches is executed for each sketch.
    This gives us up to 9 combinations to check, although some are impossible for certain states.
    Furthermore, when Branch 3 is executed we often have to consider which element is replaced as it leads to different states.
    We refer to $T,c$ and $T',c'$ as sketch 1 and sketch 2, respectively.
    
    
    {\bf State S1:} 
    If $x \in \T$ then $x \in \T'$ and Branch 1 is executed for both sketches. 
    Similarly, if Branch 2 is executed for sketch 1 it must also be executed for sketch 2 as all counters are strictly larger.
    Therefore we stay in state S1 in both cases.
    It is impossible to execute Branch 3 for sketch 2 since all counters are non-zero by definition.
    As such the final case to consider is when $x \notin \T$ and there is a counter with value $0$ in sketch 1.
    In this case, we execute Branch 3 for sketch 1 and Branch 2 for sketch 2.
    This results in state S4.

    {\bf State S2:}
    If $x \in \T$ we execute Branch 1 on sketch 1 and there are two possible outcomes.
    If $x \neq x_1$ we also execute Branch 1 on sketch 2 and remain in state S2.
    If $x = x_1$ we execute Branch 2 on sketch 2 in which case there are no changes to $T$ or $T'$ but now $c_x = 1$ and $c_i = c'_i$ for all $i \in \TT$. 
    As such, we transitioned to state S4. 
    
    Since $c_{x_1}=0$ by definition Branch 2 is never executed on sketch 1 and Branch 3 is never executed on sketch 2 as all counters are non-zero. 
    If $x=x_2$ Branch 3 is executed on sketch 1 and Branch 1 is executed for sketch 2. 
    If $c_0 = x_1$ the sketches have the same keys after processing $x$ and transition to state S1, otherwise if $c_0 \neq x_1$ the sketches still differ for one key and remain in state S2.
    
    Finally, if Branch 3 is executed on sketch 1 and Branch 2 is executed on sketch 2 we again have two possibilities.
    In both cases, the sketches store the same count on all elements from $\TT$ after processing $x$.
    If $c_0 = x_1$ it is removed from $T$ and replaced by $x$ with $c_x=1$ which corresponds to state S4.
    If $c_0 \neq x_1$ we must have that $c_0 \in \TT$.
    The two sketches now differ on exactly two keys after processing $x$.
    One of the two keys stored in sketch 2 that are not in sketch 1 must be the minimum zero key since the elements $c_0$ and $x_2$ now have counts of zero in sketch 2 and $c_0$ was the minimum zero key in $\TT$. 
    Therefore we transition to state S6.

    {\bf State S3:}
    The simplest case is $x \in \T$ since then $x \in \T'$ and Branch 1 is executed for both sketches. 
    If Branch 2 is executed for sketch 1 and $c'_{x_1} \neq 0$ Branch 2 is also executed for sketch 2.
    For both cases, we remain in state S3.
    Instead, if $c'_{x_1} = 0$ Branch 3 is executed for sketch 2.
    Since all counters are decremented for sketch 1 and $x_1$ is replaced in sketch 2 we transition to state S2.
    Lastly, if Branch 3 is executed for sketch 1 it is also executed for sketch 2 and there are two cases.
    If the same element is removed we remain in state S3.
    Otherwise, if $x_1$ is replaced in sketch 2 we transition to state S4.

    {\bf State S4:}
    Sketch 2 contains a counter with a value of zero in states S4, S5, and S6. Therefore Brach 2 is never executed on sketch 2 in the rest of the proof.
    If Branch 1 is executed for both sketches we stay in the same state as always but if $x=x_1$ Branch 1 is executed for sketch 1 and Branch 3 is executed for sketch 2.
    If $c'_0=x_2$ then $T=T'$ after processing $x$ and we transition to state S3.
    If $c'_0 \neq x_2$ another element is removed from sketch 2 which must also have a count of zero in sketch 1 and we go to state S5. 
    
    If Branch 2 is executed on sketch 1 we know that $c'_{x_2}$ must be the only zero counter in sketch 2.
    Therefore it does not matter if Branch 1 or 3 is executed on sketch 2.
    For both cases, we set $c_x = 1$ and the sketches differ in one key which corresponds to state S2.
    
    Finally, if Branch 3 is executed on sketch 1 we again have two cases that lead to the same state.
    If $x=x_2$ or $c'_0 = x_2$ the counter $c'_{x_2}$ is updated or replaced but the counter that was removed from sketch 1 still remains in sketch 2.
    Otherwise, we have $c_0=c'_0$ and we replace the same counter in sketches 1 and 2.
    Therefore we remain in state S4 in both cases. 

    {\bf State S5:}
    Since by definition both sketches contain counters with a value of zero, Branch 2 is never executed while in this state.
    If $x \in \TT$ we remain in the same state as always. 
    We have to consider the cases where $x = x_2$, $x = x_3$, and $x \notin T \cup T'$.
    The resulting state depends on the elements that are replaced in the sketch.
    For $x = x_2$ we transition to state S3 if $c'_0 = x_3$ and remain in state S5 otherwise.
    The same argument shows that we transition to state S3 or S5 based on $c_0$ if $x=x_3$.
    When $x \notin T \cup T'$ we execute Branch 3 on both sketches. 
    We transition to state S3 only if $c_0 = x_2$ and $c'_0 = x_3$ since otherwise both sketches still have a zero counter that is not stored in the other sketch and we stay in state S5.

    {\bf State S6:}
    Similar to state S5, Branch 2 is never executed from this state.
    Here we have to consider the five cases where $x \in \TT$, $x = x_1$, $x=x_2$, $x \in w'$, and $x \notin T \cup T'$.
    We know that $x_4$ is replaced whenever $x \notin T'$.
    If $x \in \TT$ we execute Branch 1 on both sketches and remain in state S6.
    If $x = x_1$ we transition to state S5 and for $x = x_2$ we transition to state S4.
    When $x \in w'$ there are two possibilities. 
    We always have $c_x = c'_x$ after updating. 
    If $c_0 = x_2$ the sketches will share $k-1$ keys and transition to state S4.
    If $c_0 \neq x_2$ then another element that has a count of zero in both sketches is replaced in sketch 1.
    We know that either this element or the remaining zero-valued element of $w'$ must be the smallest zero-valued element in sketch 2.
    Therefore we remain in state S6.
    
    The final case to consider is when $x \notin T \cup T'$. 
    In this case Branch, 3 is executed for sketch 2 and $x_4$ is replaced with $x$ in $T'$.
    If $c_0 = x_2$ we transition to state S4.
    Otherwise, either $x_3$ or the element that was replaced from sketch 1 must be the minimum element with a count of zero in sketch 2.
    As such, we remain in state S6.
%
   %
%
\end{proof}

Next, we consider how to add noise to release the Misra-Gries sketch under differential privacy.
Recall that Chan et al.~\cite{DP-continual-heavy-hitters} achieves privacy by adding noise to each counter which scales with $k$.
We avoid this by utilizing the structure of sketches for neighboring streams shown in Lemma~\ref{lem:nb-cases}.
We sample noise from $\Lap{1/\epsilon}$ independently for each counter, but we also sample one more random variable from the same distribution which is added to all counters.
Small values are then discarded using a threshold to hide differences in the sets of stored keys between neighboring inputs. 
This is similar to the technique used by e.g.~\cite{DP-approx-sparse-hist}.
The algorithm takes the output from $\MG$ as input. 
We sometimes write $\PMG(k, \x)$ as a shorthand for $\PMG(\MG(k, \x))$.
{
\begin{figure}[H]
\centering
\begin{minipage}{.85\linewidth}

\begin{algorithm}[H]
\caption{Private Misra-Gries~(PMG) \label{alg:PMG}}
\SetKwInOut{Input}{Input}
\SetKwInOut{Parameters}{Parameters}

\Parameters{$\epsilon, \delta > 0$}
\Input{Output from Algorithm~\ref{alg:MG}: $T, c \leftarrow \text{MG}(k, \x)$}

$\tilde{T} \leftarrow \emptyset$ \\
Sample $\eta \sim \Lap{1/\epsilon}$ \\

\ForEach{$x \in T$}
{
    $c_x \leftarrow c_x + \eta + \Lap{1/\epsilon}$ \Comment{\makebox[5.15cm][l]{Add the two noise samples to count}} \\
    \If(\Comment{\makebox[5.15cm][l]{Small noisy values are removed}}){$c_x \geq 1 + \thres$} {\label{line:weird_condition}
        $\tilde{T} \leftarrow \tilde{T} \cup \{x\}$ \\
        $\tilde{c}_x \leftarrow c_x$
    }
}
\Return{$\tilde{T}, \tilde{c}$}
\end{algorithm}

\end{minipage}
\end{figure}
}

We prove the privacy guarantees in three steps. 
First, we show that changing either a single counter or all counters by 1 does not change the output distribution significantly (Corollary~\ref{cor:eps-close}). 
This assumes that, for neighboring inputs, the set of stored elements is exactly the same.
By Lemma~\ref{lem:nb-cases}, we have that the difference between the sets of stored keys is small and the corresponding counters are $\leq 1$. 
Relying on the thresholding, we bound the probability of outputting one of these keys (Lemma~\ref{lem:nonsupport-delta}).
Finally, we combine these two lemmas to show that the privacy guarantees hold for all cases (we do this in Lemma~\ref{lem:PMG-privacy}).

\begin{lemma} \label{lem:vectors_inequality}
    Let us have $x, x' \in \mathbb{R}^k$ such that one of the following three cases holds
    \begin{enumerate}
        \item $\exists i \in [k]$ such that $| x_i - x'_i | = 1$ and $x_j = x'_j$ for all $j \neq i$.
        \item $x_i = x'_i - 1$ for all $i \in [k]$.
        \item $x_i = x'_i + 1$ for all $i \in [k]$.
    \end{enumerate}
    Then we have for any measurable set $Z$ that
 \[
 \Pr[x + \Laplace(1/\epsilon)^{\otimes k} + \Lap{1/\epsilon} \mathbf{1}^k\in Z] 
 \leq e^{\epsilon} \Pr[x' + \Lap{1/\epsilon}^{\otimes k} + \Lap{1/\epsilon} \mathbf{1}^k\in Z] .
 \]
\end{lemma}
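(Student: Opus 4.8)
The plan is to treat the three cases separately, and in each one to absorb the \emph{entire} discrepancy between $x$ and $x'$ into a \emph{single} one-dimensional Laplace variable, after which the statement follows from the elementary fact that translating a $\Lap{1/\epsilon}$ variable by a distance at most $1$ changes the probability of any event by at most a factor $e^{\epsilon}$. Concretely, writing $f(t)=\tfrac{\epsilon}{2}e^{-\epsilon|t|}$ for the density of $\Lap{1/\epsilon}$, I will use the pointwise bound $f(t-a)\le e^{\epsilon|a|}f(t)$, which is immediate from $|t-a|\ge |t|-|a|$. To sidestep any measure-theoretic fuss about conditioning, I would express each probability of the form $\Pr[x+\Lap{1/\epsilon}^{\otimes k}+\Lap{1/\epsilon}\mathbf 1^k\in Z]$ as the explicit integral
\begin{equation*}
\int_{\mathbb R}\int_{\mathbb R^k}\mathbf 1\!\left[(x_1+\ell_1+t,\dots,x_k+\ell_k+t)\in Z\right]\Big(\textstyle\prod_{j=1}^k f(\ell_j)\Big)f(t)\,d\ell\,dt,
\end{equation*}
which is legitimate by Tonelli's theorem since the integrand is nonnegative and measurable.

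For Case~$1$ I would assume without loss of generality that the differing coordinate is $i=1$ and substitute $s=x_1+\ell_1$ in the integral for $x$ (and $s=x'_1+\ell_1$ in the integral for $x'$). Since $x_j=x'_j$ for all $j\ge 2$, the two integrands then become identical except that the factor $f(s-x_1)$ is replaced by $f(s-x'_1)$; as $|x_1-x'_1|=1$ we have $f(s-x_1)\le e^{\epsilon}f(s-x'_1)$ pointwise, and integrating this inequality yields the claim. For Cases~$2$ and~$3$ the roles of the two noise terms are reversed: here $x+\ell+t\mathbf 1^k=x'+\ell+(t\pm 1)\mathbf 1^k$ with the sign depending on the case, so substituting $s=t\pm 1$ in the integral for $x$ turns its integrand into exactly that of the $x'$-integral except that $f(t)$ is replaced by $f(s\mp 1)$, and $f(s\mp 1)\le e^{\epsilon}f(s)$ pointwise finishes the case. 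Cases~$2$ and~$3$ are mirror images of each other under $t\mapsto -t$, so only one of them really needs to be written out in full.

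The only thing requiring genuine thought is the observation that the structure of the three cases is exactly matched to the two pieces of noise: a change localized to a single coordinate (Case~$1$) is hidden by the independent per-coordinate noise $\Lap{1/\epsilon}^{\otimes k}$, while a uniform shift of all coordinates (Cases~$2$ and~$3$) is hidden by the single shared variable $\Lap{1/\epsilon}\mathbf 1^k$. Once this is seen there is no real obstacle; the main point to be careful about is that, after each substitution, the remaining integrand—in all variables other than the one absorbing the shift—is genuinely unchanged between the $x$ and $x'$ versions, which is precisely where the hypothesis that $x$ and $x'$ agree on all the untouched coordinates is used. It is also worth remarking that the cruder argument sketched in the technical overview, which lumps the whole difference into one $\ell_1$-ball of radius $<2$, loses the factor of $2$ exactly because it fails to exploit this matching.
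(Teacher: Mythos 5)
Your proof is correct and follows essentially the same route as the paper: the key observation in both is that the single-coordinate difference of case (1) is absorbed by the independent per-coordinate Laplace noise, while the uniform $\pm\mathbf{1}^k$ shift of cases (2) and (3) is absorbed by the shared scalar Laplace noise. The only difference is presentational — you carry this out via explicit Tonelli integrals, a change of variables, and the pointwise density bound $f(t-a)\le e^{\epsilon|a|}f(t)$, whereas the paper conditions on the other noise term and invokes the standard Laplace-mechanism inequality as a black box.
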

\begin{proof}
Throughout the proof, we construct sets by applying a translation to all elements of another set. 
That is, for any 
$\phi \in \mathbb{R}^k$ and measurable set $Z$ we define $Z - \phi = \{a \in \mathbb{R}^k \vert a + \phi \in Z\}$.
We first focus on the simpler case $(1)$. It holds by the law of total expectation that
\begin{align*}
\Pr[x + \Lap{1/\epsilon}^{\otimes k} + \Lap{1/\epsilon} \mathbf{1}^k \in Z] &=\\ 
\E_{N \sim \Lap{1/\epsilon}}\big[\Pr[\Lap{1/\epsilon}^{\otimes k} \in Z - x - N\mathbf{1}^k|N]\big] &\leq\\ 
e^\epsilon \E_{N \sim \Lap{1/\epsilon}}\big[\Pr[\Lap{1/\epsilon}^{\otimes k} \in Z - x' - N\mathbf{1}^k|N]\big] &=\\ 
e^\epsilon \Pr[x' + \Lap{1/\epsilon}^{\otimes k} + \Lap{1/\epsilon}\mathbf{1}^k \in Z]& \,,
\end{align*}
where to prove the inequality, we used that for any measurable set $A$, it holds
\[
\Pr[\Lap{1/\epsilon}^{\otimes k} \in A] \leq e^\epsilon \Pr[\Lap{1/\epsilon}^{\otimes k} \in A - \phi]
\]for any $\phi \in \mathbb{R}^k$ with $\|\phi\|_1 \leq 1$ (see \cite{dworkCalibratingNoise}).
Specifically, we have set $A = Z - x - N\mathbf{1}^k$ and $\phi = x - x'$ such that $\|\phi\|_1 = 1$.

We now focus on the cases $(2)$ and $(3)$. 
We will prove below that for $x,x'$ satisfying one of the conditions $(2),(3)$ and for any measurable $A$, $Z$ and $N_1 \sim \Lap{1/\epsilon}^{\otimes k}$, it holds
\begin{align*} \label{eq:to_be_proved}
\Pr[x + N_1 + \Lap{1/\epsilon} \mathbf{1}^k \in Z | N_1 \in A] 
\leq e^\epsilon \Pr[x' + N_1 + \Lap{1/\epsilon} \mathbf{1}^k \in Z | N_1 \in A] \,.
\end{align*}
This allows us to argue like above:
\begin{align*}
\Pr[x + \Lap{1/\epsilon}^{\otimes k} + \Lap{1/\epsilon} \mathbf{1}^k \in Z] &=\\ 
\E_{N_1 \sim \Lap{1/\epsilon}^{\otimes k}}\big[\Pr[x + N_1 + \Lap{1/\epsilon} \mathbf{1}^k \in Z | N_1 ]\big] &\leq\\ 
e^\epsilon \E_{N_1 \sim \Lap{1/\epsilon}^{\otimes k}}\big[\Pr[x' + N_1 + \Lap{1/\epsilon} \mathbf{1}^k \in Z | N_1 ]\big] &=\\ 
e^\epsilon \Pr[x' + \Lap{1/\epsilon}^{\otimes k} + \Lap{1/\epsilon} \mathbf{1}^k \in Z] & \,,
\end{align*}
which would conclude the proof. 
Let $g: \mathbb{R} \rightarrow \mathbb{R}^k$ be the function $g(a) = a \mathbf{1}^k$ and define $g^{-1}(B) = \{a \in \mathbb{R} \vert g(a) \in B\}$ and note that $g$ is measurable. 
We focus on the case $(2)$; the same argument works for $(3)$ as we discuss below. It holds
\begin{align*}
\Pr[x + N_1 + \Lap{1/\epsilon} \mathbf{1}^k \in Z | N_1 \in A] &=\\ 
\Pr[\Lap{1/\epsilon} \mathbf{1}^k \in Z - x - N_1| N_1 \in A] &=\\ 
\Pr[\Lap{1/\epsilon} \in g^{-1}(Z - x - N_1)| N_1 \in A] &=\\ 
\Pr[\Lap{1/\epsilon} \in g^{-1}(Z - x' - \mathbf{1}^k - N_1)| N_1 \in A] &=\\ 
\Pr[\Lap{1/\epsilon} \in g^{-1}(Z - x' - N_1) - 1| N_1 \in A] &\leq\\ 
e^\epsilon \Pr[\Lap{1/\epsilon} \in g^{-1}(Z - x' - N_1)| N_1 \in A] &=\\ 
e^\epsilon \Pr[\Lap{1/\epsilon} \mathbf{1}^k \in Z - x' - N_1| N_1 \in A] &=\\ 
e^\epsilon \Pr[x' + N_1 + \Lap{1/\epsilon} \mathbf{1}^k \in Z | N_1 \in A] & \,.
\end{align*}
To prove the inequality, we again used the standard result that for any measurable $A$, it holds that $\Pr[\Lap{1/\epsilon} \in A] \leq e^{\epsilon} \Pr[\Lap{1/\epsilon} \in A-1]$. 
The same holds for $A+1$; this allows us to use the exact same argument in case $(3)$, in which the proof is the same except that $-1$ on lines 4,5 of the manipulations is replaced by $+1$.
%
%
\end{proof}

\begin{corollary} \label{cor:eps-close}
    Let $T,c$ and $T',c'$ be two sketches such that $T=T'$ and one of following holds:
    \begin{enumerate}
        \item $\exists i \in \T$ such that $| c_i - c'_i | = 1$ and $c_j = c'_j$ for all $j \neq i$.
        \item $c_i = c'_i - 1$ for all $i \in T$.
        \item $c_i = c'_i + 1$ for all $i \in T$.
    \end{enumerate}
    Then for any measurable set of outputs $Z$, we have:
    \[\Pr[\PMG(T,c) \in Z] \leq e^\epsilon \Pr[\PMG(T',c') \in Z]\,.\]
\end{corollary}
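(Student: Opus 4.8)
The plan is to derive the corollary immediately from Lemma~\ref{lem:vectors_inequality} together with the fact that $(\epsilon,0)$-indistinguishability is preserved under (data-independent) post-processing. The key point is that, since $T = T'$, we may fix once and for all a bijection between $T$ and $[k]$ and identify the counter arrays $c$ and $c'$ with vectors $x, x' \in \mathbb{R}^k$. Under this identification, the three cases in the statement of the corollary are literally the three cases of Lemma~\ref{lem:vectors_inequality}.

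Next I would observe that $\PMG(T, \cdot)$ factors through the noisy counter vector. Concretely, $\PMG$ draws $\eta \sim \Lap{1/\epsilon}$ and independent per-counter noise, which together form the random vector $x + \Lap{1/\epsilon}^{\otimes k} + \Lap{1/\epsilon}\mathbf{1}^k$ (the $\eta$ term being exactly $\Lap{1/\epsilon}\mathbf{1}^k$, as $\eta$ is added to every counter); it then applies the deterministic map $\Psi$ that, given a vector $v \in \mathbb{R}^k$ and the fixed key set $T$, outputs $(\tilde T, \tilde c)$ with $\tilde T = \{x \in T : v_x \geq 1 + \thres\}$ and $\tilde c$ equal to $v$ restricted to $\tilde T$. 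Because $T = T'$, the \emph{same} map $\Psi$ is applied to both inputs, and $\Psi$ is measurable (the choice of surviving coordinates is piecewise constant, and on each piece $\Psi$ is a coordinate projection). Hence $\PMG(T,c) = \Psi\bigl(x + \Lap{1/\epsilon}^{\otimes k} + \Lap{1/\epsilon}\mathbf{1}^k\bigr)$, and likewise for $c'$.

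Now for any measurable output set $Z$, set $A = \Psi^{-1}(Z)$, which is measurable. Then
\[
\Pr[\PMG(T,c) \in Z] = \Pr[x + \Lap{1/\epsilon}^{\otimes k} + \Lap{1/\epsilon}\mathbf{1}^k \in A] \leq e^\epsilon \Pr[x' + \Lap{1/\epsilon}^{\otimes k} + \Lap{1/\epsilon}\mathbf{1}^k \in A] = e^\epsilon \Pr[\PMG(T',c') \in Z],
\]
where the middle inequality is exactly Lemma~\ref{lem:vectors_inequality} applied to $x, x'$ (which fall into one of its three cases) and the measurable set $A$. This completes the argument.

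The proof is essentially bookkeeping, so there is no genuine obstacle; the only thing to be careful about is making the decomposition $\PMG = \Psi \circ (\text{noise addition})$ precise — in particular, verifying that the surviving-key set and the thresholded values depend on the data only through the noisy counter vector (which is exactly where $T = T'$ is used) and that $\Psi$ is measurable, so that $\Psi^{-1}(Z)$ is a legitimate set to feed into Lemma~\ref{lem:vectors_inequality}.
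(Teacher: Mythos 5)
Your argument is correct and matches the paper's own proof: the paper likewise views $\PMG$ as the noise-addition step of Lemma~\ref{lem:vectors_inequality} followed by a data-independent thresholding map $\tau$ (your $\Psi$), and transfers the inequality to $\PMG$ by pulling $Z$ back through $\tau^{-1}$. Your write-up just makes the measurability and the dependence on the common key set $T=T'$ slightly more explicit.
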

\begin{proof}
Consider first a modified algorithm $\PMG'$ that does not perform the thresholding: that is, if we remove the condition on line~\ref{line:weird_condition}. 
It can be easily seen that $\PMG'$ only takes the vector $c$ and releases $c + \Lap{1/\epsilon}^{\otimes k} + \Lap{1/\epsilon} \mathbf{1}^k$. 
We have just shown in Lemma~\ref{lem:vectors_inequality} that this means that for any measurable $Z'$,
$$\Pr[\PMG'(T,c) \in Z'] \leq e^\epsilon \Pr[\PMG'(T',c') \in Z']\,.$$

Let $\tau(x) = x$ for $x \geq 1+2\ln(3/\delta)/\epsilon$ and $0$ otherwise. 
Since $\PMG(T,c) = \tau(\PMG'(T,c))$, it then holds 
\begin{align*} 
\Pr[\PMG(T,c) \in Z] = & \Pr[\PMG'(T,c) \in \tau^{-1}(Z)] \leq\\  & e^\epsilon \Pr[\PMG'(T',c') \in \tau^{-1}(Z)] = e^\epsilon \Pr[\PMG(T',c') \in Z] \,,
\end{align*}
as we wanted to show.
\end{proof}

Next, we bound the effect on the output distribution from keys that differ between sketches by $\delta$.

\begin{lemma} \label{lem:nonsupport-delta}
    Let $T,c$ and $T',c'$ be two sketches of size $k$ and let $\hat{T} = T \cap T'$. 
    If we have that $\vert \hat{T} \vert \geq k - 2$, $c_i=c'_i$ for all $i \in \hat{T}$,
    and for all $x \notin \hat{T}$, it holds $c_x,c'_x \leq 1$. 
    Then for any measurable set $Z$, it holds
    $$\Pr[\PMG(T,c) \in Z] \leq \Pr[\PMG(T',c') \in Z] + \delta \,.$$ 
\end{lemma}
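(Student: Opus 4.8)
The plan is to couple the two executions of $\PMG$ and to show that they produce the same output except on an event of probability at most $\delta$. Recall that $\PMG$ uses one shared draw $\eta \sim \Lap{1/\epsilon}$ and, independently, one further $\Lap{1/\epsilon}$ variable per stored counter. I would couple $\PMG(T,c)$ and $\PMG(T',c')$ by feeding them the same $\eta$, the same per-counter noise for each key of $\hat{T} = \TT$, and mutually independent per-counter noise for the keys of $T \setminus \hat{T}$ and of $T' \setminus \hat{T}$ (there are at most two of each, since $|\hat{T}| \ge k-2$ and $|T| = |T'| = k$). This preserves the marginal distribution of each run, so it is a valid coupling.

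Next I would introduce the ``good'' event $G = A \cap A'$, where $A$ is the event that no key of $T \setminus \hat{T}$ passes the threshold in the run on $(T,c)$ and $A'$ is the analogous event for $T' \setminus \hat{T}$ in the run on $(T',c')$. On $G$ the output of $\PMG(T,c)$ is supported on $\hat{T}$, and likewise for $\PMG(T',c')$; and for every $i \in \hat{T}$ we have $c_i = c'_i$ by hypothesis and the same noise is added in both runs, so $i$ survives the threshold in one run iff it survives in the other, with identical released value. Hence $\PMG(T,c) = \PMG(T',c')$ on $G$, which gives $\Pr[\PMG(T,c) \in Z] \le \Pr[\PMG(T',c') \in Z,\, G] + \Pr[\overline{G}] \le \Pr[\PMG(T',c') \in Z] + \Pr[\overline{G}]$.

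It then remains to bound $\Pr[\overline{G}] \le \delta$. Writing $s = \ln(3/\delta)/\epsilon$, the threshold is $1 + \thres = 1 + 2s$, and a key $x \notin \hat{T}$ with per-counter noise $Z_x$ passes only if $c_x + \eta + Z_x \ge 1 + 2s$; since $c_x \le 1$ by hypothesis this forces $\eta + Z_x \ge 2s$, hence $\eta \ge s$ or $Z_x \ge s$. Each of these events has probability $\tfrac{1}{2} e^{-s\epsilon} = \delta/6$ by Definition~\ref{def:lappdfcdf}, so a union bound over $\{\eta \ge s\}$ together with the at most $|T \setminus \hat{T}| + |T' \setminus \hat{T}| \le 4$ per-counter variables of keys outside $\hat{T}$ yields $\Pr[\overline{G}] \le 5\delta/6 \le \delta$, completing the argument. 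The only real work is in setting up the coupling and checking that the whole output (keys and values together) agrees on $G$; splitting the threshold as $s + s$ sidesteps any need to reason about the tail of a sum of two Laplace variables, which would otherwise be the fiddly part.
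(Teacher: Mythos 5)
Your proof is correct, and it packages the argument slightly differently from the paper. The paper does not couple: it introduces an auxiliary mechanism $\PMG'$ that post-processes $\PMG$ by discarding keys outside $\hat{T}$, observes that $\PMG'(T,c)$ and $\PMG'(T',c')$ are \emph{identically distributed} because the inputs agree on $\hat{T}$, and then relates $\PMG$ to $\PMG'$ by additive inequalities whose error terms are exactly the probabilities that some key outside $\hat{T}$ survives the threshold in either run. You instead share the randomness ($\eta$ and the per-counter noise on $\hat{T}$) across the two runs and get exact equality of the full outputs on a good event $G$, which yields the same additive-$\delta$ comparison in one step. Both proofs then reduce to the same core estimate, splitting the threshold $1+\thres$ as $1+s+s$ with $s=\ln(3/\delta)/\epsilon$ so that a key with count at most $1$ can only survive if a single Laplace sample exceeds $s$, each such event having probability $\delta/6$. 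The only quantitative difference is that your coupling counts the shared $\eta$ once, giving $5\cdot\delta/6$, whereas the paper's two independent executions have two global noises and union-bound over $6$ samples, giving exactly $\delta$; so your route is marginally tighter and arguably cleaner, while the paper's avoids any discussion of couplings by phrasing everything as post-processing of distributions.
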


\begin{proof}
    Let $\PMG'(T,c)$ denote a mechanism that executes $\PMG(T, c)$ and post-processes the output by discarding any elements not in $\hat{T}$.
    It is easy to see that $(a) \,\,\, \Pr[\PMG'(T,c) \in Z] = \Pr[\PMG'(T',c') \in Z]$ since the input sketches are identical for all elements in $\hat{T}$.
    Moreover, for any output $\tilde{T}, \tilde{c} \leftarrow \PMG(T,c)$ for which $\tilde{T} \subseteq \hat{T}$, the post-processing does not affect the output. 
    This gives us the following inequalities: 
    $(b) \,\,\, \Pr[\PMG(T,c) \in Z] \leq \Pr[\PMG'(T,c) \in Z] + \Pr[\tilde{T} \not\subseteq \hat{T}]$ and 
    $(c) \,\,\, \Pr[\PMG'(T',c') \in Z] \leq \Pr[\PMG(T,c) \in Z] + \Pr[\tilde{T}' \not\subseteq \hat{T}]$. 
    Combining equations $(a) - (c)$, we get the inequality $\Pr[\PMG(T,c) \in Z] \leq \Pr[\PMG(T',c') \in Z] + \Pr[\tilde{T} \not\subseteq \hat{T}] + \Pr[\tilde{T}' \not\subseteq \hat{T}]$. 

    As such, the Lemma holds if $\Pr[\tilde{T} \not\subseteq \hat{T}] + \Pr[\tilde{T}' \not\subseteq \hat{T}] \leq \delta$. 
    That is, it suffices to prove that with probability at most $\delta$ any noisy count for elements not in $\hat{T}$ is at least $1 + \thres$.
    The noisy count for such a key can only exceed the threshold if one of the two noise samples added to the key is at least $\ln(3/\delta)/\epsilon$.
    From Definition~\ref{def:lappdfcdf-dpmg} we have $\Pr[\Lap{1/\epsilon} \geq \ln(3/\delta)/\epsilon] = \delta/6$.
    There are at most 4 keys not in $\hat{T}$ which are in $T \cup T'$ and therefore at most $6$ noise samples affect the probability of outputting such a key (the 4 individual Laplace noise samples and the 2 global Laplace noise samples, one for each sketch).
    By a union bound the probability that any of these samples exceeds $\ln(3/\delta)/\epsilon$ is at most $\delta$.
    %
\end{proof}

We are now ready to prove the privacy guarantee of Algorithm~\ref{alg:PMG}.

\begin{lemma} \label{lem:PMG-privacy}
Algorithm~\ref{alg:PMG} is $(\epsilon, \delta)$-differentially private for any $k$.
\end{lemma}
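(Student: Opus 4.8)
The plan is to combine the three preceding results — Lemma~\ref{lem:nb-cases} (the structural description of neighboring sketches), Corollary~\ref{cor:eps-close} (the $e^\epsilon$ bound when the key sets agree and the counts differ by $1$ in one of the three allowed patterns), and Lemma~\ref{lem:nonsupport-delta} (the $\delta$ bound that lets us ignore the at most four keys outside the intersection) — into a single $(\epsilon,\delta)$ statement. Fix neighboring streams $S \sim S'$, where without loss of generality $S'$ is obtained from $S$ by deleting one element, and let $T,c \leftarrow \MG(k,S)$ and $T',c' \leftarrow \MG(k,S')$. The goal is to show $\Pr[\PMG(T,c)\in Z] \le e^\epsilon \Pr[\PMG(T',c')\in Z] + \delta$ for every measurable $Z$ (and the symmetric inequality).

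The key idea is to interpose an intermediate pair of sketches that \emph{agrees with $T,c$ on the intersection $\hat T = T\cap T'$ and has $T'$ as its key set}, so that one step is covered by Lemma~\ref{lem:nonsupport-delta} and the other by Corollary~\ref{cor:eps-close}. Concretely: by Lemma~\ref{lem:nb-cases} we are in case (1) or case (2). In case (2), $c_i = c'_i - 1$ for all $i \in T'$ and $c_j = 0$ for all $j\notin T'$. Define $\bar T = T'$ and $\bar c_i = c_i$ for $i \in \hat T$ and $\bar c_i = c'_i - 1$ for $i \in T'\setminus\hat T$; then $(\bar T,\bar c)$ and $(T',c')$ have the same key set $T'$ and differ by exactly $-1$ on every counter, so Corollary~\ref{cor:eps-close}(2) gives $\Pr[\PMG(\bar T,\bar c)\in Z] \le e^\epsilon\Pr[\PMG(T',c')\in Z]$; and $(T,c)$ versus $(\bar T,\bar c)$ satisfies the hypotheses of Lemma~\ref{lem:nonsupport-delta} (they agree on $\hat T$, the symmetric difference has at most $2+2=4$ keys, and all counters outside $\hat T$ are $\le 1$ — the ones in $T\setminus\hat T$ are $0$ by the lemma, the ones in $T'\setminus\hat T$ are $\le 1$ since in Algorithm~\ref{alg:PMG} the relevant counter values outside the intersection are bounded by $1$ per Lemma~\ref{lem:nb-cases}), so $\Pr[\PMG(T,c)\in Z]\le \Pr[\PMG(\bar T,\bar c)\in Z] + \delta$. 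Chaining the two inequalities yields the claim. In case (1), $c$ and $c'$ already have the same key set except on at most $4$ keys with counts $\le 1$ and differ by $\pm 1$ at a single coordinate, so the same two-step interpolation works with Corollary~\ref{cor:eps-close}(1) in place of (2), and in fact we may need to be slightly careful about whether the differing coordinate lies in $\hat T$ or not — if it lies outside, it is simply absorbed into the Lemma~\ref{lem:nonsupport-delta} step.

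For the reverse direction $\Pr[\PMG(T',c')\in Z]\le e^\epsilon\Pr[\PMG(T,c)\in Z]+\delta$, note Corollary~\ref{cor:eps-close} is already stated in both directions (cases (2) and (3) are mirror images, and case (1) is symmetric), and Lemma~\ref{lem:nonsupport-delta} is symmetric in its two sketches, so the identical argument run with the roles swapped gives it; alternatively one observes that the neighbor relation is symmetric and the case analysis of Lemma~\ref{lem:nb-cases} covers the deletion in either direction.

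The main obstacle — really the only place that needs care — is checking that the intermediate sketch $(\bar T,\bar c)$ genuinely meets the hypotheses of \emph{both} Lemma~\ref{lem:nonsupport-delta} and Corollary~\ref{cor:eps-close} simultaneously in each case of Lemma~\ref{lem:nb-cases}, in particular bookkeeping which keys lie in $\hat T$, which lie in the symmetric difference, and that every counter outside $\hat T$ in \emph{both} $(T,c)$ and $(\bar T,\bar c)$ is at most $1$ so that Lemma~\ref{lem:nonsupport-delta} applies; one also wants $|\hat T|\ge k-2$, which is exactly what Lemma~\ref{lem:nb-cases} guarantees. Once the intermediate object is set up correctly, the proof is just the two-line chaining of inequalities above. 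I would also double-check the constant in the threshold: Lemma~\ref{lem:nonsupport-delta} was proved with the threshold $1 + \thres = 1 + 2\ln(3/\delta)/\epsilon$ used in Algorithm~\ref{alg:PMG}, so nothing extra is needed there.
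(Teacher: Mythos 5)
Your proposal is correct and follows essentially the same route as the paper: a hybrid argument that interposes an intermediate sketch agreeing with $c$ on $T\cap T'$ and supported on $T'$, applies Lemma~\ref{lem:nonsupport-delta} to one step and Corollary~\ref{cor:eps-close} (via the case structure of Lemma~\ref{lem:nb-cases}) to the other, and chains the inequalities, handling both directions of the removal. The only cosmetic difference is that when the single differing coordinate lies outside $T\cap T'$ you absorb it directly into the Lemma~\ref{lem:nonsupport-delta} step (which is valid), whereas the paper instead builds an explicit intermediate counter $\hat{c}_j=1$ on a zero-count key of $T'\setminus T$.
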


\begin{proof}
    The Lemma holds if and only if for any pair of neighboring streams $\x \sim \x'$ and any measurable set $Z$ we have:
    $$ \Pr[\PMG(T,c) \in Z] \leq e^\epsilon \Pr[\PMG(T',c') \in Z] + \delta \,,$$
    where $T, c \leftarrow \MG(k, \x)$ and $T', c' \leftarrow \MG(k, \x')$ denotes the non-private sketches for each stream.

    We prove the guarantee above using an intermediate sketch that ``lies between'' $T,c$ and $T',c'$.
    The sketch has support $T'$ and we denote the counters as $\hat{c}$. By Lemma~\ref{lem:nb-cases}, we know that $\vert T \cap T' \vert \geq k - 2$ and all counters in $c$ and not in $T \cap T'$ are at most 1. 
    We will now present some conditions on $\hat{c}$ such that if these conditions hold, the lemma follows. 
    We will then prove the existence of such $\hat{c}$ below.
    Assume that $\hat{c}_i=c_i$ for all $i \in T \cap T'$ and $\hat{c}_j \leq 1$ for all $j \in T' \setminus T$.  
    Lemma~\ref{lem:nonsupport-delta} then tells us that
    \[ 
        \Pr[\PMG(T,c) \in Z] \leq \Pr[\PMG(T',\hat{c}) \in Z] + \delta \,.
    \]
    Assume also that one of the cases required for Corollary~\ref{cor:eps-close} holds between $\hat{c}$ and $c'$. We have
    \[ 
        \Pr[\PMG(T',\hat{c}) \in Z] \leq e^\epsilon \Pr[\PMG(T',c') \in Z] \,. 
    \]
    Therefore, if such a sketch $T',\hat{c}$ exists for all $\x$ and $\x'$ the lemma holds since
    \[
        \Pr[\PMG(T,c) \in Z] \leq \Pr[\PMG(T',\hat{c}) \in Z] + \delta 
        \leq e^\epsilon \Pr[\PMG(T',c') \in Z] + \delta \, . 
    \]
    

    It remains to prove the existence of $\hat{c}$ such that $\hat{c}_i=c_i$ for all $i \in T \cap T'$ and $\hat{c}_j \leq 1$ for all $j \in T' \setminus T$ and such that one of the conditions $(1)-(3)$ of Corollary~\ref{cor:eps-close} holds between $\hat{c}$ and $c'$.
    We first consider neighboring streams where $\x'$ is obtained by removing an element from $\x$. 
    From Lemma~\ref{lem:nb-cases} we have two cases to consider.
    If $c_i = c'_i - 1$ for all $i \in T'$ we simply set $\hat{c} = c$. 
    Recall that we implicitly have $c_i = 0$ for $i \notin T$.
    Therefore the sketch satisfies the two conditions above since $\hat{c}_i = c_i$ for all $i \in \Universe$ and condition (2) of Corollary~\ref{cor:eps-close} holds. 
    In the other case where $c_i = c'_i + 1$ for exactly one $i \in T$ there are two possibilities.
    If $i \in T'$ we again set $\hat{c} = c$. 
    When $i \notin T'$ there must exist at least one element $j \in T'$ such that $c'_j = 0$ and $j \notin T$.
    We set $\hat{c}_j = 1$ and $\hat{c}_i = c'_i$ for all $i \neq j$. 
    In both cases $\hat{c}_i = c_i$ for all $i \in T \cap T'$ and $\hat{c}_j$ is at most one for $j \notin T$. 
    There is exactly one element with a higher count in $\hat{c}$ than $c'$ which means that condition (1) of Corollary~\ref{cor:eps-close} holds. 

    If $\x$ is obtained by removing an element from $\x'$ the cases from Lemma~\ref{lem:nb-cases} are flipped.
    If $c_i - 1 = c'_i$ for all $i \in T$ and $c'_j = 0$ for $j \notin T$ we set $\hat{c}_i = c_i$ if $i \in T$ and $\hat{c}_i = 1$ otherwise.
%
    It clearly holds that $\hat{c}_i=c_i$ for all $i \in T \cap T'$ and $\hat{c}_j \leq 1$ for all $j \notin T$. 
    Since $\hat{c}_i = c'_i + 1$ for all $i \in T'$ condition (3) of Corollary~\ref{cor:eps-close} holds. 
    Finally, if $c_i + 1 = c'_i$ for exactly one $i \in T'$ we simply set $\hat{c}=c$. 
    It clearly holds that $\hat{c}_i=c_i$ for all $i \in T \cap T'$, we have $\hat{c}_j=0$ for all $j \notin T$, and condition (1) of Corollary~\ref{cor:eps-close} holds between $\hat{c}$ and $c'$. 
    %
\end{proof}

Next, we analyze the error compared to the non-private MG sketch. 
We state the error in terms of the largest error among all elements of the sketch.
Recall that we implicitly say that the count is zero for any element not in the sketch.

\begin{lemma} \label{lem:PMG-error}
    Let $\tilde{T}, \tilde{c} \leftarrow \PMG(T, c)$ denote the output of Algorithm~\ref{alg:PMG} for any sketch $T, c$ with $\vert T \vert = k$. 
    Then with probability at least $1 - \beta$ we have 
    \begin{align*}
        \tilde{c}_x \in \Bigg[c_x - \frac{2\ln\big(\frac{k+1}{\beta}\big)}{\epsilon} - 1 - \frac{2\ln\big(3/\delta\big)}{\epsilon}, 
     c_x + \frac{2\ln\big(\frac{k+1}{\beta}\big)}{\epsilon} \Bigg] 
    \end{align*}
    for all $x \in T$ and $\tilde{c}_x = 0$ for all $x \notin T$. 
\end{lemma}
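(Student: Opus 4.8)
The plan is to condition on a single high-probability event that simultaneously controls the magnitude of every noise sample, and then argue about each key of $T$ separately, splitting on whether the key survives the thresholding on line~\ref{line:weird_condition}.

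First, the claim $\tilde{c}_x = 0$ for $x \notin T$ is immediate: Algorithm~\ref{alg:PMG} only ever inserts keys of $T$ into $\tilde{T}$, so $\tilde{T} \subseteq T$ and the convention gives $\tilde{c}_x = 0$. Next I would define the good event $E$ on which $|\eta| \le \ln((k+1)/\beta)/\epsilon$ and, for each of the $k$ independent per-counter samples drawn from $\Lap{1/\epsilon}$, that sample also has absolute value at most $\ln((k+1)/\beta)/\epsilon$. By Definition~\ref{def:lappdfcdf} we have $\Pr[\,|\Lap{1/\epsilon}| \ge \ln((k+1)/\beta)/\epsilon\,] = e^{-\ln((k+1)/\beta)} = \beta/(k+1)$, and since there are exactly $k+1$ Laplace variables in play (the shared $\eta$ plus one per counter), a union bound gives $\Pr[\overline{E}] \le \beta$, hence $\Pr[E] \ge 1-\beta$. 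Only the marginal tail bound is needed here, so the fact that $\eta$ is shared across counters causes no trouble.

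Then I would condition on $E$ and fix $x \in T$. Writing $Y_x = \eta + \Lap{1/\epsilon}$ for the total noise added to $c_x$, under $E$ we have $|Y_x| \le 2\ln((k+1)/\beta)/\epsilon$. If $c_x + Y_x \ge 1 + \thres$, then the key survives and $\tilde{c}_x = c_x + Y_x$, which lies in $[\,c_x - 2\ln((k+1)/\beta)/\epsilon,\ c_x + 2\ln((k+1)/\beta)/\epsilon\,]$ and is therefore contained in the claimed (wider) interval. If instead $c_x + Y_x < 1 + \thres$, the key is discarded and $\tilde{c}_x = 0$: the upper bound holds because $c_x \ge 0 \ge -2\ln((k+1)/\beta)/\epsilon$, and the lower bound holds because $c_x < 1 + \thres - Y_x \le 1 + \thres + 2\ln((k+1)/\beta)/\epsilon$, i.e. $0 = \tilde{c}_x \ge c_x - 1 - \thres - 2\ln((k+1)/\beta)/\epsilon$. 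Since $E$ already controls all $k+1$ noise variables at once, no further union bound over the $k$ keys is needed.

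The argument is essentially bookkeeping; the one point worth stating carefully is why the stated interval is asymmetric. The extra slack $-1 - \thres$ on the left is exactly what is required to absorb the case in which a counter is pushed below the threshold and zeroed out, and nonnegativity of the input counters $c_x$ is what keeps the upper bound valid in that same case. I do not anticipate any genuine obstacle beyond making sure the count of Laplace samples entering the union bound is $k+1$ rather than $k$.
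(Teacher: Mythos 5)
Your proof is correct and follows essentially the same route as the paper's: a tail bound of $\beta/(k+1)$ per Laplace sample, a union bound over all $k+1$ samples, a $2\ln((k+1)/\beta)/\epsilon$ bound on the combined noise per counter, and the observation that thresholding adds at most $1 + \thres$ extra one-sided error while $\tilde{T} \subseteq T$ handles keys outside $T$. The only difference is presentational—you make the survive/discard case split explicit where the paper states it as a one-line bound—so nothing further is needed.
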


\begin{proof}
    The two sources of error are the noise samples and the thresholding step.
    We begin with a simple bound on the absolute value of the Laplace distribution.

    \[
        \Pr\bigg[\vert \Lap{1/\epsilon} \vert \geq \frac{\ln((k+1)/\beta)}{\epsilon}\bigg] = 
        2 \cdot \Pr\bigg[ \Lap{1/\epsilon} \leq -\frac{\ln((k+1)/\beta)}{\epsilon}\bigg] = \beta/(k+1) \, .
    \]
    Since $k + 1$ samples are drawn we have by a union bound that the absolute value of all samples is bounded by $\ln((k+1)/\beta)/\epsilon$ with probability at least $1 - \beta$.
    As such the absolute error from the Laplace samples is at most $2\ln((k+1)/\beta)/\epsilon$ for all $x \in T$ since two samples are added to each count.
    Removing noisy counts below the threshold potentially adds an additional error of at most $1 + \thres$.
    It is easy to see that $\tilde{c}_x = 0$ for all $x \notin T$ since the algorithm never outputs any such elements. 
\end{proof}

\begin{theorem} \label{thm:final_theorem}
    $\PMG(k, \x)$ satisfies $(\epsilon,\delta)$-differential privacy. 
    Let $f(x)$ denote the frequency of any element $x \in \Universe$ in $S$ and let $\hat{f}(x)$ denote the estimated frequency of $x$ from the output of $\PMG(k, \x)$.
    For any $x$ with $f(x)=0$ we have $\hat{f}(x)=0$ and with probability at least $1 - \beta$ we have for all $x \in \Universe$ 
    \[
    \hat{f}(x) \in \left[ f(x) - \frac{2\ln\left(\frac{k+1}{\beta}\right)}{\epsilon} - 1 - \frac{2 \ln (3/\delta)}{\epsilon} - \frac{\vert \x \vert}{k+1},
    f(x) + \frac{2\ln\left(\frac{k+1}{\beta}\right)}{\epsilon} \right] \, .
    \]
    
    Moreover, the algorithm outputs all $x$, such that $\hat{f}(x) > 0$ and there are at most $k$ such elements. 
    $\PMG(k,\x)$ uses $2k$ words of memory.
    The mean squared error for any fixed element $x \in \Universe$ is bounded by 
    $\E[(\hat{f}(x) - f(x))^2] \leq 3\left(1 + \frac{2 + 2 \ln(3/\delta)}{\epsilon} + \frac{|\x|}{k+1}\right)^2$.
    
\end{theorem}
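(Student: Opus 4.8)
The plan is to assemble the theorem from the facts already in hand: Lemma~\ref{lem:PMG-privacy} for privacy, Fact~\ref{fact:MG-error} for the deterministic error of the non-private sketch, and Lemma~\ref{lem:PMG-error} for the error added by the noise and the thresholding. Privacy is immediate: $\MG$ is a deterministic function of the stream, so $\PMG(k,\cdot)=\PMG(\MG(k,\cdot))$ is exactly the mechanism shown $(\epsilon,\delta)$-differentially private in Lemma~\ref{lem:PMG-privacy}. For the zero-frequency claim, I would note that Algorithm~\ref{alg:MG} only ever inserts elements that actually occur in the stream (it starts from dummy keys in $\{d+1,\dots,d+k\}\notin\U$, which are removed in post-processing), so $f(x)=0$ forces $x\notin T$ and hence $\hat f(x)=0$. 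For the support size: $\tilde T\subseteq T$ and $|T|=k$ by construction, and Algorithm~\ref{alg:PMG} inserts $x$ into $\tilde T$ only when its noisy count is at least $1+\thres>0$, so $\hat f(x)>0$ exactly for $x\in\tilde T$, which gives both ``outputs all $x$ with $\hat f(x)>0$'' and ``at most $k$ of them''. Finally, the memory bound is immediate: Algorithm~\ref{alg:MG} stores $k$ key--value pairs and Algorithm~\ref{alg:PMG} only perturbs them in place and drops some, using $O(1)$ extra words, for $2k$ words total.

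For the high-probability interval, write $c$ for the counters of $\MG(k,\x)$, extended by $c_x=0$ for $x\notin T$, and set $n:=|\x|$. Fact~\ref{fact:MG-error} gives $c_x\in[f(x)-n/(k+1),\,f(x)]$ for every $x\in\U$ (for $x\notin T$ this is just $0\in[f(x)-n/(k+1),f(x)]$, i.e.\ $f(x)\le n/(k+1)$). Lemma~\ref{lem:PMG-error}, applied with failure probability $\beta$, says that with probability at least $1-\beta$, simultaneously for all $x\in T$, $\tilde c_x\in[c_x-2\ln((k+1)/\beta)/\epsilon-1-2\ln(3/\delta)/\epsilon,\ c_x+2\ln((k+1)/\beta)/\epsilon]$, while $\tilde c_x=0$ for $x\notin T$. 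Adding these two inclusions coordinatewise yields the stated interval for all $x\in T$ on that event; for $x\notin T$ one checks the trivial one-sided bounds $0\le f(x)+2\ln((k+1)/\beta)/\epsilon$ and $0\ge f(x)-n/(k+1)-\cdots$ using $f(x)\le n/(k+1)$. This gives the claimed $1-\beta$ statement for all $x\in\U$ at once.

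For the mean squared error, fix $x\in\U$ and condition on the (deterministic) stream, so the only randomness is the Laplace noise. Decompose $\hat f(x)-f(x)=(c_x-f(x))+(\hat f(x)-c_x)$, where $|c_x-f(x)|\le n/(k+1)$ deterministically by Fact~\ref{fact:MG-error}. For the second term, let $W=\eta+\nu_x$ be the sum of the shared and the per-counter noise, two independent copies of $\Lap{1/\epsilon}$: if $x$ survives the threshold then $\hat f(x)-c_x=W$; if it is dropped then $\hat f(x)=0$ and, by the condition on line~\ref{line:weird_condition}, $0\le c_x<1+\thres-W$, so in either case $|\hat f(x)-c_x|\le (1+\thres)+|W|$ (and the same bound holds vacuously for $x\notin T$). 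Hence $|\hat f(x)-f(x)|\le A+|W|$ with $A:=n/(k+1)+1+\thres$. Since $W$ is mean-zero with variance $4/\epsilon^2$, we have $E[W^2]=4/\epsilon^2$ and $E[|W|]\le 2/\epsilon$, so expanding the square gives $E[(\hat f(x)-f(x))^2]\le A^2+2A\cdot(2/\epsilon)+4/\epsilon^2=(A+2/\epsilon)^2=\big(1+(2+2\ln(3/\delta))/\epsilon+n/(k+1)\big)^2$, which is at most three times that quantity; the factor $3$ in the statement is exactly the slack from a cruder $(a+b+c)^2\le 3(a^2+b^2+c^2)$ split into MG error, thresholding error, and noise. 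The only step needing genuine care is this last one: one must handle the thresholding branch inside the expectation so that a single clean bound $|\hat f(x)-c_x|\le (1+\thres)+|W|$ covers both cases, and then check that the cross term $2A\,E[|W|]$ is absorbed — which works because $E[|W|]\le 2/\epsilon$ matches the $2/\epsilon$ already built into the target bound.
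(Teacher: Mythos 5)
Your proposal is correct, and for the privacy, support, memory, and high-probability error claims it follows the paper's own route: the paper likewise just invokes Lemma~\ref{lem:PMG-privacy}, Fact~\ref{fact:MG-error}, and Lemma~\ref{lem:PMG-error} (your explicit treatment of $x\notin T$ via $f(x)\le |\x|/(k+1)$ is a detail the paper leaves implicit). Where you genuinely diverge is the mean squared error. The paper writes $\hat f(x)-f(x)=r_1+r_2+r_3$ (noise, thresholding, sketch error), applies $(r_1+r_2+r_3)^2\le 3(r_1^2+r_2^2+r_3^2)$, and asserts $|r_2|\le 1+\thres$ and $|r_3|\le |\x|/(k+1)$ deterministically with $E[r_1^2]=4/\epsilon^2$; this is where the factor $3$ in the statement comes from. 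You instead prove the single pointwise bound $|\hat f(x)-c_x|\le (1+\thres)+|W|$ covering both the kept and the dropped branch, and then expand $E[(A+|W|)^2]\le A^2+2A\,E[|W|]+E[W^2]=(A+2/\epsilon)^2$ using $E[|W|]\le\sqrt{E[W^2]}=2/\epsilon$. This buys two things: your bound $\bigl(1+(2+2\ln(3/\delta))/\epsilon+|\x|/(k+1)\bigr)^2$ is a factor $3$ tighter than the stated one (which it trivially implies), and it is the more rigorous treatment of the thresholding case --- if one insists on defining $r_1$ as the Laplace sample itself, the residual ``thresholding error'' in the dropped branch is $-c_x-W$, which is not deterministically bounded by $1+\thres$ when $W$ is very negative, so the paper's three-term split needs exactly the kind of case analysis you carry out to be made airtight. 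In short: same skeleton, but your MSE argument is both sharper and more careful than the paper's.
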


\begin{proof}
    The space complexity is clearly as claimed, as we are storing at any time at most $k$ items and counters. 
    We focus on proving privacy and correctness.

    If $f(x) = 0$ we know that $x \notin T$ where $T$ is the keyset after running Algorithm~\ref{alg:MG}. 
    Since Algorithm~\ref{alg:PMG} outputs a subset of $T$ we have $\hat{f}(x) = 0$.
    The first part of the Theorem follows directly from Fact~\ref{fact:MG-error} and Lemmas \ref{lem:PMG-privacy} and~\ref{lem:PMG-error}. 

    We now bound the mean squared error. There are three sources of error. 
    Let $r_1$ be the error coming from the Laplace noise, $r_2$ from the thresholding, and $r_3$ the error made by the MG sketch. Then 
    \[
    \E[(\hat{f}(x) - f(x))^2] = \E[(r_1 + r_2 + r_3)^2] \leq 3(\E[r_1^2] + \E[r_2^2] + \E[r_3^2])
    \]
    by equivalence of norms (for any dimension $n$ vector $v$, $\|v\|_1 \leq \sqrt{n}\|v\|_2$).
    The errors $r_2,r_3$ are deterministically bounded $r_2 \leq 1+2\ln(3/\delta)/\epsilon$ and $r_3 \leq |\x|/(k+1)$. $\E[r_1^2]$ is the variance of the Laplace noise; we added two independent noises each with 
    scale $1/\epsilon$ and thus 
    variance $2/\epsilon^2$ for a total variance of $4/\epsilon^2$. This finishes the proof.
\end{proof}

\subsection{Privatizing standard versions of the Misra-Gries sketch}
\label{sec:MG-without-ordering}

The privacy of our mechanism as presented in Algorithm~\ref{alg:PMG} relies on our variant of the Misra-Gries algorithm.
Our sketch can contain elements with a count of zero. 
However, elements with a count of zero are removed in the standard version of the sketch.
As such, sketches for neighboring datasets can differ for up to $k$ keys if one sketch stores $k$ elements with a count of $1$ and the other sketch is empty.
It is easy to change Algorithm~\ref{alg:PMG} to handle these implementations. 
We simply increase the threshold to $1 + 2\ln\left(\frac{k+1}{2\delta}\right)/\epsilon$ since the probability of outputting any of the $k$ elements with a count of $1$ is bounded by $\delta$.

\subsection{Tips for practitioners}
\label{sec:practitioners}

Here we discuss some technical details to keep in mind when implementing our mechanism.

The output of the Misra-Gries algorithm is an associative array. 
In Algorithm~\ref{alg:PMG} we add appropriate noise such that the associative array can be released under differential privacy.
However, for some implementations of associative arrays such as hash tables the order in which keys are added affects the data structure.
Using such an implementation naively violates differential privacy, but it is easily solved either by outputting a random permutation of the key-value pairs or using a fixed order e.g. sorted by key.

We present our mechanism with noise sampled from the Laplace distribution. 
However, the distribution is defined for real numbers which cannot be represented on a finite computer. 
This is a known challenge and precision-based attacks still exist on popular implementations~\cite{DP-floating-point-attacks}.
Since the output of $\MG$ is discrete the distribution can be replaced by the Geometric mechanism~\cite{DP-geometric-mechanism} or one of the alternatives introduced in \cite{DP-finite-computers}.
Our mechanism would still satisfy differential privacy but it might be necessary to change the threshold in Algorithm~\ref{alg:PMG} slightly to ensure that Lemma~\ref{lem:nonsupport-delta} still holds.
Our proof of Lemma~\ref{lem:nonsupport-delta} works for the Geometric mechanism from~\cite{DP-geometric-mechanism} when increasing the threshold to $1 + 2\ceil{\ln(6e^\varepsilon/((e^\varepsilon + 1)\delta))/\varepsilon}$.

Lastly, it is worth noting that the analysis for Lemma~\ref{lem:nonsupport-delta} is not tight. 
We bound the probability of outputting a small key by bounding the value of all relevant samples by $\ln(3/\delta)/\epsilon$ which is sufficient to guarantee that the sum of any two samples does not exceed $2\ln(3/\delta)/\epsilon$. 
This simplifies the proof and presentation significantly however one sample could exceed $\ln(3/\delta)/\epsilon$ without any pair of samples exceeding $2\ln(3/\delta)/\epsilon$. 
A tighter analysis would improve the constant slightly which might matter for practical applications.

\section{Pure Differential Privacy}
\label{sec:pure}

In this section, we discuss how to achieve $\epsilon$-differential privacy.
We cannot use our approach from Section~\ref{sec:alg} where we add the same noise to all keys because the set of stored keys can differ between sketches for neighboring datasets.
Instead, we achieve privacy by adding noise to all elements of $\Universe$ scaled to the $\ell_1$-sensitivity.
Chan et al.~\cite{DP-continual-heavy-hitters} showed that the sensitivity of Misra-Gries sketches scales with the number of counters. 
We show that a simple post-processing step reduces the sensitivity of the sketch to $2$ and the worst-case error of the sketch is still $n/(k+1)$ where $n = \vert \x \vert$. 
This allows us to achieve an error of $n/(k+1) + O(\log (d)/\epsilon)$.

The $\ell_1$-sensitivity scales with the size of the MG sketch since the counts can differ by 1 for all $k$ elements between neighboring datasets. 
This happens for any pair of neighboring streams where the decrement step of Algorithm~\ref{alg:MG} (Branch 2) is executed one more time for one stream compared to the other.
We get around this case by post-processing the sketch before adding noise.
We subtract a value from each counter in the sketch that depends on the number of times Branch 2 of Algorithm~\ref{alg:MG} was executed.
Although we increase the error of the MG sketch, the worst-case error guarantee remains the same, since at most $n/(k+1)$ has been subtracted from any count.
We present the pseudocode of our approach in Algorithm~\ref{alg:MG-pure-DP}.
Next, we show that the worst-case error guarantees of Algorihtm~\ref{alg:MG-pure-DP} match those of the MG sketch. 
We then bound the $\ell_1$-sensitivity.

{
\begin{figure}[h]
\centering
\begin{minipage}{.85\linewidth}

\begin{algorithm}[H]
\caption{Misra-Gries Sketch Sensitivity Reduction \label{alg:MG-pure-DP}}
\SetKwInOut{Input}{Input}

\Input{Positive integer $k$ and stream $\x \in \Xsingle$}

$T, c \leftarrow \text{MG}(k, \x)$ \\ 
$\gamma \leftarrow {\sum_{x \in T} c_x/(k+1)}$ \Comment{\makebox[6cm][l]{Compute offset value}} \\
$\hat{T} \leftarrow \emptyset$ \\
\ForEach{$x \in T$}
{
        \If(\Comment{\makebox[6cm][l]{Removes negative frequency estimates}}){$c_x > \gamma$} { \label{line:sens-reduction-condition}
          $\hat{T} \leftarrow \hat{T} \cup \{x\}$ \\
          $\hat{c}_x \leftarrow c_x - \gamma$ \Comment{\makebox[6cm][l]{Apply offset to all counts in the sketch}}
        }
}
\Return{$\hat{T}, \hat{c}$}
\end{algorithm}

\end{minipage}
\end{figure}
}

\begin{lemma}
    \label{lem:pure-dp-postprocessing-error}
    Let $\hat{f}(x)$ be the frequency estimates given by Algorithm~\ref{alg:MG-pure-DP} for size $k$. 
    Then for all $x \in \mathcal{U}$, it holds that $\hat{f}(x) \in [f(x) - \vert S\vert/(k+1), f(x)]$, where $f(x)$ is the true frequency of $x$ in $\x$.
\end{lemma}

\begin{proof}
    Let $\alpha$ denote the number of times Branch 2 of Algorithm~\ref{alg:MG} was executed.
    Then the frequency estimate of any $x \in \mathcal{U}$ before post-processing must be in $[f(x) - \alpha, f(x)]$, since only the decrement step of the MG sketch introduces any error.
    Note that we have $\alpha \leq \vert S \vert/(k+1)$ by Fact~\ref{fact:MG-error}.
    Notice that $\sum_{x \in T} c_x = \vert S \vert - \alpha \cdot (k + 1)$ since $k$ counters are decremented and $1$ element is ignored each time Branch 2 of Algorithm~\ref{alg:MG} is executed.
    As such, we have $\gamma = \vert S \vert/(k+1) - \alpha$ and $c_x - \gamma \in [f(x) - \vert S \vert/(k+1), f(x) - \vert S \vert/(k+1) + \alpha]$.
    The claim therefore holds for any $x \in \hat{T}$ since $\hat{c}_x = c_x - \gamma$.
    For any key that is removed by post-processing, that is $x \in T \setminus \hat{T}$, it holds that $c_x - \gamma \leq 0$.
    This implies that the true frequency $f(x)$ is at most $\vert S \vert/(k + 1)$ so the error guarantee still holds.     
    Since any $x \notin T$ is unaffected by post-processing, those frequency estimates still lie in the interval $[f(x) - \alpha, f(x)]$.
\end{proof}

\begin{lemma}
    \label{lem:pure-dp-postprocessing-sensitivity}
    Let $\hat{T},\hat{c}$ and $\hat{T}',\hat{c}'$ denote the output of Algorithm~\ref{alg:MG-pure-DP} for a pair of neighboring streams.
    Consider $\hat{c} \in \mathbb{R}^d$ where we define $\hat{c}_x = 0$ for all $x \notin \hat{T}$ and define $\hat{c}'$ similarly.
    Then we have $\|\hat{c} - \hat{c}'\|_1 < 2$. 
\end{lemma}

\begin{proof}
    Let $\x \sim \x'$ denote any pair of neighboring streams where $\x'$ is obtained by removing one element from $\x$.
    The proof is symmetric if $\x'$ is obtained by adding one element to $\x$.
    We show that the sensitivity is bounded by considering the effect of our proposed post-processing for both cases of Lemma~\ref{lem:nb-cases}.
    Notice that we can ignore any counter that has a value of zero in any sketch, because they are always removed by the condition on Line~\ref{line:sens-reduction-condition}.

    We first consider the case where $c_i = c'_i - 1$ for all $i \in T'$.
    Here we have $\sum_{x \in T} c_x = \sum_{x \in T'} c'_x - k$ which implies that $\gamma = \gamma' - k/(k+1) = \gamma' - 1 + 1/(k+1)$.
    As such, we have for any $i \in T'$ that $c_i - \gamma = c'_i - \gamma' - 1/(k+1)$.
    From this it follows that for any $i \in \hat{T}'$ we have $\hat{c}_i - \hat{c}'_i = -1/(k+1)$, and for any $j \notin \hat{T}'$ we have $\hat{c}_j = \hat{c}'_j = 0$. 
    We can thus bound the distance between the vectors by $\| \hat{c} - \hat{c}' \|_1 = \sum_{i \in [d]} \vert \hat{c}_i - \hat{c}_i' \vert = \sum_{x \in \hat{T}'} \vert \hat{c}_x - \hat{c}'_x \vert = \vert \hat{T}' \vert \cdot 1/(k+1) \leq k /(k+1) < 1$.

    Next, we consider the case where sketches differ in only one counter before post-processing. 
    That is, there exists an element $x \in \mathcal{U}$ such that $c_x = c'_x + 1$ and all other counters are equal.
    In this case we have $\gamma = \gamma' - 1/(k+1)$ 
    and $c_i - \gamma = c'_i - \gamma' + 1/(k+1)$ for any $i \neq x$.
    As such, we either have $\hat{c}_i - \hat{c}'_i = 1/(k+1)$ or $\hat{c}_i = \hat{c}'_i = 0$ for each $i \in \Universe \setminus \{x\}$.
    Furthermore, we have that $0 \leq \hat{c}_x - \hat{c}'_x \leq 1 + 1/(k+1)$.
    We conclude the proof by bounding the distance between the vectors by $\| \hat{c} - \hat{c}' \|_1 = \sum_{i \in [d]} \vert \hat{c}_i - \hat{c}_i' \vert = \sum_{i \in \hat{T}} \vert \hat{c}_i - \hat{c}'_i \vert = \vert \hat{T} \setminus \{x\} \vert \cdot 1/(k+1) + \vert \hat{c}_x - \hat{c}'_x \vert \leq 1 + k /(k+1) < 2$.
\end{proof}

We achieve $\epsilon$-differential privacy by adding noise to our new sketch. 
We essentially use the same technique as Chan et al.~\cite{DP-continual-heavy-hitters} but the noise no longer scale linearly in $k$ as the sensitivity of Algorithm~\ref{alg:MG-pure-DP} is bounded by $2$.
Specifically, we add noise sampled from $\Lap{2/\epsilon}$ independently to the count of each element from $\Universe$ and release the top-$k$ noisy counts.
A simple union bound shows that with probability at least $1-\beta$ the absolute value of all samples is bounded by $2\ln(d/\beta)/\epsilon$.
Note that it might be infeasible to actually sample noise for each element when $\Universe$ is large; we refer to previous work on how to implement this more efficiently~\cite{DP-continual-heavy-hitters, DP-pure-sparse-hist, DP-finite-computers}.

It is worth noting that the low sensitivity of the post-processed sketch can also be utilized under $(\epsilon, \delta)$-differential privacy. 
We can use an approach similar to \cite{DP-approx-sparse-hist}. 
They add noise to all non-zero counters and remove noisy counts below a threshold to hide small counters. 
Applying the standard approach for histograms would require a threshold with a small dependence on $k$ as neighboring sketches might disagree on all keys.  
However, \cite[Algorithm~9]{DP-ALP} extended the technique to real-valued vectors by probabilistically rounding elements with a value less than the $\ell_1$-sensitivity.
If we apply their technique directly we get a threshold of $4 + 2\ln(1/\delta)/\epsilon$.
As such, we can achieve error guarantees that match those from Theorem~\ref{thm:final_theorem} up to constant factors.
However, this approach has worse guarantees than Algorithm~\ref{alg:PMG} when comparing to the non-private Misra-Gries sketch.
By Lemma~\ref{lem:PMG-error} the error of Algorithm~\ref{alg:PMG} when compared to the non-private MG sketch is $O(\log(1/\delta)/\epsilon)$ with high probability (for sufficiently small $\delta$).
Here the error is $n/(k+1) + O(\log(1/\delta)/\epsilon)$ since we subtract up to $n/(k+1)$ from the counters before adding noise.

\section{Privatizing merged sketches}
\label{sec:merge}

In practice, it is often important that we may merge sketches. 
This is for example commonly used when we have a dataset distributed over many servers. 
Each dataset consists of multiple streams in this setting, and we want to compute an aggregated sketch over all streams.
We say that datasets are neighboring if we can obtain one from the other by removing a single element from one of the streams.
If the aggregator is untrusted we must add noise to each sketch before performing any merges.
This is the setting in~\cite{DP-continual-heavy-hitters} and we can run their merging algorithm.
However, since we add noise to each sketch the error scales with the number of sketches.
In particular, the error from the thresholding step of Algorithm~\ref{alg:PMG} scales linearly in the number of sketches for worst-case input.
In the rest of this section, we consider the setting where aggregators are trusted. 
We can achieve optimal error in this setting by applying the post-processing step from the previous section to each sketch before aggregating the counters of each element. 
The $\ell_1$-sensitivity of the aggregated sketch is still bounded by 2.
However, the aggregated sketch might have much more than $k$ counters. 
This approach increases the memory requirement of the aggregator linearly with the total number of sketches in the worst case. 
Next we consider a merging algorithm where aggregators never store more than $2k$ counters.

Agarwal, Cormode, Huang, Phillips, Wei, and Yi~\cite{mergeable-summaries} introduced the following simple merging algorithm in the non-private setting. 
Given two Misra-Gries sketches $T_1, c_1 \leftarrow \MG(k, \x^{(1)})$ and $T_2, c_2 \leftarrow \MG(k, \x^{(2)})$ they first compute the sum of all counters $c_1 + c_2$. 
There are up to $2k$ counters at this point. 
They subtract the value of the $k + 1$'th largest counter from all elements.
Finally, any non-positive counters are removed leaving at most $k$ counters.
They show that merged sketches have the same worst-case guarantee as non-merged Misra-Gries sketches. 
That is, if we compute a Misra-Gries sketch for each stream $(S^{(1)},\ldots,S^{(\numstreams)})$ and merge them into a single sketch, the frequency estimate of all elements is at most $N/(k+1)$ less than the true frequency. 
Here $N$ is the total length of all streams.
This holds for any order of merging and the streams do not need to have the same length.


Unfortunately, the structure between neighboring sketches where either a single counter or exactly $k$ counters differ by $1$ breaks down when merging.
Therefore we cannot run Algorithm~\ref{alg:PMG} on the merged sketch.
However, as we show below, the global sensitivity of merged sketches is independent of the number of merges.
The sensitivity only depends on the number of counters.
We first show a property for a single merge operation; this will allow us to bound the sensitivity for any number of merges.
Note that unlike in Section~\ref{sec:alg}, we do not limit the number of keys that differ between sketches and we do not store keys with a count of zero. 

\begin{lemma} \label{lem:merge-step-sensitivity}
    Let $T_1, c_1$, $T'_1, c'_1$ and $T_2, c_2$ denote Misra-Gries sketches of size $k$ and denote the sketches merged with the algorithm above as $\hat{T}, \hat{c} \leftarrow \merge (T_1, c_1, T_2, c_2)$ and $\hat{T}', \hat{c}' \leftarrow \merge (T'_1, c'_1, T_2, c_2)$. 
    If $T'_1 \subseteq T_1$ and $c_{1i} - c'_{1i} \in \{0, 1\}$ for all $i \in \Universe$ then at least one of the following holds (1) $\hat{T}' \subseteq \hat{T}$ and $\hat{c}_{i} - \hat{c}'_{i} \in \{0, 1\}$ for all $i \in \Universe$ or (2) $\hat{T} \subseteq \hat{T}'$ and $\hat{c}'_{i} - \hat{c}_{i} \in \{0, 1\}$ for all $i \in \Universe$.
\end{lemma}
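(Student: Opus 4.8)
The plan is to track what the merge algorithm does to the two pairs of inputs $(T_1,c_1,T_2,c_2)$ and $(T_1',c_1',T_2,c_2)$ simultaneously, and to exploit the fact that the two summed-count vectors $c_1+c_2$ and $c_1'+c_2$ are very close: since $T_1'\subseteq T_1$ and $c_{1i}-c_{1i}'\in\{0,1\}$ on $T_1$ (with $c_{1j}'=0$ implicitly for $j\in T_1\setminus T_1'$), the vector $u:=c_1+c_2$ and $u':=c_1'+c_2$ satisfy $u_i-u_i'\in\{0,1\}$ coordinate-wise on $T_1\cup T_2$, and agree everywhere else. So before the shift-and-truncate step, the merged pre-sketches differ by at most $1$ in each of at most $2k$ coordinates, and the whole question is how the ``subtract the $(k+1)$-st largest counter, then drop non-positives'' operation interacts with such a perturbation.

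First I would set $\sigma$ to be the $(k+1)$-st largest value among the coordinates of $u$ and $\sigma'$ the corresponding quantity for $u'$. The key elementary observation is that passing from $u'$ to $u$ only raises coordinates (by $0$ or $1$), so the $(k+1)$-st order statistic can only go up or stay the same, and in fact $\sigma'\le\sigma\le\sigma'+1$; I would prove this by a short interlacing argument (raising one coordinate by $1$ changes any fixed order statistic by $0$ or $1$, and we are doing at most a bunch of such raises but the order statistic is monotone and $1$-Lipschitz in the $\ell_\infty$ sense with respect to such moves). This splits the analysis into two cases, $\sigma=\sigma'$ and $\sigma=\sigma'+1$, and in each case the post-shift values are $u_i-\sigma$ versus $u_i'-\sigma'$.

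In the case $\sigma=\sigma'$, for every coordinate $i$ we have $(u_i-\sigma)-(u_i'-\sigma')=u_i-u_i'\in\{0,1\}$, so after truncation the support of the $u'$-sketch is contained in that of the $u$-sketch and counters differ by $0$ or $1$ — this is conclusion (1). In the case $\sigma=\sigma'+1$, we instead get $(u_i-\sigma)-(u_i'-\sigma')=u_i-u_i'-1\in\{-1,0\}$, so now the $u$-sketch counters are pointwise $\le$ the $u'$-sketch counters, the $u$-support sits inside the $u'$-support after dropping non-positives, and the gap is $0$ or $1$ — this is conclusion (2). In both cases one has to check the truncation step is consistent: a coordinate surviving in the smaller-valued sketch automatically survives in the larger-valued one (its value is $\ge$ the other's), which is exactly what makes the support inclusion go through in the correct direction.

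The main obstacle I anticipate is the order-statistic bookkeeping when there are ties at the threshold value $\sigma$: raising several coordinates by $1$ can reshuffle which coordinates are ``above'' the cutoff, and one must make sure the claim ``$\sigma\le\sigma'+1$'' and the pointwise comparisons of the truncated vectors survive these ties. I would handle this by never arguing about \emph{which} coordinates exceed the cutoff, only about the numeric value of the truncated counters $\max(u_i-\sigma,0)$ versus $\max(u_i'-\sigma',0)$, using that $t\mapsto\max(t,0)$ is monotone and $1$-Lipschitz, together with the two inequalities $u_i'\le u_i\le u_i'+1$ and $\sigma'\le\sigma\le\sigma'+1$; the four combinations of these bounds give exactly the $\{0,1\}$ (resp.\ $\{-1,0\}$) gaps claimed, and the support inclusions follow because a positive truncated value on one side forces a $\ge$ value, hence positivity, on the dominating side. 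A minor additional point is to confirm that after truncation each sketch still has at most $k$ counters, which is immediate since at most $k$ coordinates of $u$ (resp.\ $u'$) strictly exceed their own $(k+1)$-st largest value.
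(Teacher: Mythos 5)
Your proposal is correct and follows essentially the same route as the paper's proof: observe that the summed counters differ coordinate-wise by $0$ or $1$, deduce that the $(k+1)$-st largest value shifts by $0$ or $1$, and split into those two cases to obtain conclusions (1) and (2) respectively. Your additional care with the $\max(\cdot,0)$ truncation and ties at the cutoff only makes explicit what the paper leaves implicit.
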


\begin{proof}
    Let $\bar{c}$ and $\bar{c}'$ denote the merged counters before subtracting and removing values.
    Then clearly $\bar{c}_i - \bar{c}'_i \in \{0, 1\}$ for all $i \in \Universe$.
    Therefore we have that $\bar{c}_{k+1} - \bar{c}'_{k+1} \in \{0 , 1\}$ where $\bar{c}_{k+1}$ denotes the value of the $k + 1$'th largest counter in $\bar{c}$.
    Note that it does not matter if the $k+1$'th largest counter corresponds to a different element.
    If $\bar{c}_{k+1} = \bar{c}'_{k+1}$ we subtract the same value from each sketch and we have $\hat{c}_i - \hat{c}'_i \in \{0, 1\}$ for all $i \in \Universe$ and also $\hat{T}' \subseteq \hat{T}$.
    If $\bar{c}_{k+1} - \bar{c}'_{k+1} = 1$ we subtract one more from each count in $\hat{c}$ and we have $\hat{c}'_i - \hat{c}_i \in \{0, 1\}$ for all $i \in \Universe$ and $\hat{T} \subseteq \hat{T}'$. 
\end{proof}

\begin{corollary} \label{corol:merged-sensitivity}
    Let $(S^{(1)}, \ldots, S^{(\numstreams)})$ and $(S'^{(1)}, \ldots, S'^{(\numstreams)})$ denote two sets of streams such that $S^{(i)} \sim S'^{(i)}$ for one $i \in [\numstreams]$ and $S^{(j)} = S'^{(j)}$ for any $j \neq i$.
    Let $T, c$ and $T', c'$ be the result of merging Misra-Gries sketches computed on both sets of streams in any fixed merging order.
    Then $c$ and $c'$ differ by 1 for at most $k$ elements and agree on all other counts.
\end{corollary}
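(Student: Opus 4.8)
The plan is to prove the corollary by induction on the fixed binary merge tree, using Lemma~\ref{lem:merge-step-sensitivity} as the inductive step. I view the fixed merge order as a binary tree whose $m$ leaves carry the per-stream sketches $\MG(k,S_j)$ (resp.\ $\MG(k,S'_j)$), and I let $v$ be the leaf carrying stream $S_i$ versus $S'_i$. Every node off the path from $v$ to the root is the root of a subtree built only from streams with $S_j = S'_j$, so its merged sketch is \emph{identical} in the two datasets; hence each internal node on the path merges the ``current'' partial sketch (which differs between the datasets) with a sibling sketch that is common to both datasets. This is exactly the configuration of Lemma~\ref{lem:merge-step-sensitivity}, whose second argument is shared, and $\merge$ is symmetric in its two arguments, so I can apply the lemma at each step regardless of which slot the perturbed sketch sits in.

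First I would establish the base case at $v$: for a single pair of neighboring streams $S_i \sim S'_i$, say with $|S_i| = |S'_i| + 1$, the (standard) Misra--Gries sketches satisfy, up to swapping the two, $T'_i \subseteq T_i$ and $c_{i\ell} - c'_{i\ell} \in \{0,1\}$ for every $\ell \in T_i$ (with the convention $c'_{i\ell} = 0$ for $\ell \notin T'_i$). This is the structure discussed in the technical overview: after the differing element $S_{i}$ is processed, $MG_{S_i} - MG_{S'_i}$ is either the all-ones vector on the stored keys or a single coordinate equal to $+1$, and arguing as in the proof of Lemma~\ref{lem:nb-cases} — specializing the six states to the standard sketch by deleting zero-valued counters — this is preserved by the common suffix of the two streams. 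In either resulting configuration the displayed relation, which is precisely the hypothesis of Lemma~\ref{lem:merge-step-sensitivity}, holds.

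For the inductive step I maintain the invariant that the two versions $T,c$ and $T',c'$ of the partial merge at the current node satisfy, up to swapping, $T' \subseteq T$ and $c_\ell - c'_\ell \in \{0,1\}$ for all $\ell \in \U$. The base case supplies the invariant at $v$; Lemma~\ref{lem:merge-step-sensitivity} propagates it across each merge on the path to the root, possibly flipping which of the two sketches is the ``larger'' one. At the root I therefore obtain, for the final merged sketches $T,c$ and $T',c'$, that $|c_\ell - c'_\ell| \in \{0,1\}$ for every $\ell \in \U$ with the two keysets nested; since each merged sketch has at most $k$ counters, at most $|T \cup T'| \le k$ coordinates can have $|c_\ell - c'_\ell| = 1$ and every other count agrees, which is the claimed statement.

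I expect the main obstacle to be bookkeeping rather than anything deep: Lemma~\ref{lem:merge-step-sensitivity} permits the nesting direction to flip at every merge, so the invariant has to be stated symmetrically and the lemma re-invoked with relabeled roles, and one must verify that the sibling subtrees genuinely coincide across the two datasets (true because only stream $i$ is perturbed and the merge order is fixed). A secondary point requiring a little care is the base case: Lemma~\ref{lem:nb-cases} is stated for the zero-retaining variant of Misra--Gries, whereas here we use the standard sketch, so its state analysis must be transcribed by dropping zero counters — routine, but worth spelling out.
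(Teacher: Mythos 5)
Your proposal is correct and follows essentially the same route as the paper, which likewise proves the base case for a single pair of neighboring streams via Lemma~\ref{lem:nb-cases} and then inducts over the merge operations using Lemma~\ref{lem:merge-step-sensitivity}; your merge-tree formulation, the symmetric nesting invariant, and the remark that the counts of the standard sketch coincide with those of the zero-retaining variant are exactly the details the paper's two-sentence proof leaves implicit.
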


\begin{proof}
    The condition of Lemma~\ref{lem:merge-step-sensitivity} clearly holds for sketches of a pair of neighboring datasets by Lemma~\ref{lem:nb-cases} if we remove all elements with a count of zero. 
    The claim holds by induction after each merging operation by Lemma~\ref{lem:merge-step-sensitivity}.
\end{proof}

Since the $\ell_1$-sensitivity is $k$ we can use the algorithm in~\cite{DP-continual-heavy-hitters} that adds noise with magnitude $k/\epsilon$ to all elements in $\Universe$ and keeps the top-$k$ noisy counts.
If we only add noise to non-zero counts we can hide that up to $k$ keys can change between neighboring inputs with a threshold. 
The two approaches have expected maximum error compared to the non-private merged sketch of $O(k \log (d) / \epsilon)$ and $O(k \log (k/\delta) / \epsilon)$, respectively. 
For $(\varepsilon, \delta)$-differential privacy we can also utilize that the $\ell_2$-sensitivity is $\sqrt{k}$ because counters only differ by $1$.
This allows us to add noise that scales slower with $k$ using the Gaussian Sparse Histogram Mechanism~\cite{GaussianSparseHistogramMechanism} which we discuss further in the next section.

\section{User-level Differential Privacy} 
\label{sec:user-level}

So far we considered the setting where a user has exactly one element. 
In this section, we consider the more general setting where a user contributes up to $m$ distinct elements.
Specifically, this means that the input is a stream $\x = (\x_1, \x_2, \dots, \x_{n-1}, \x_{n})$, where for all $i \in [n]$ we have $\x_i \subseteq \Universe$ and $\vert \x_i \vert \leq m$. 
Our goal is to estimate the frequency of each element, that is $f(x) = \sum_{i \in [n]} \mathbbm{1}[x \in \x_i]$. 
We denote the total length of the stream as $N = \sum_{i \in [n]} \vert \x_i \vert$.
If we want to compute a MG sketch in this setting we have to flatten the input.
We denote by $\hat{\x}$ a stream created by processing the sets in $\x$ one at a time by iterating over each element in the set in some fixed order (e.g. ascending order).

We first show that Algorithm~\ref{alg:PMG} satisfies differential privacy in this setting.
The magnitude of noise and the threshold increase as a function of $m$.
We then introduce a new sketch with similar error guarantees to the MG sketch that can be released with less noise than Algorithm~\ref{alg:PMG} for many parameters in this setting.
Note that throughout this section we assume that $k > m$. 
This is a fair assumption because the error guarantees of the MG sketch are meaningless when $m > k$ because $f(x) \leq n$ by definition and $N/(k + 1) = (n \cdot m)/(k + 1) \geq n$ if $\vert S_i \vert = m > k$ for all $i \in [n]$.

We start by restating the group privacy property of differential privacy in our context.


\begin{lemma}[Group Privacy (Following \cite{DworkRothBook})]
    \label{lem:group-privacy}
    Let $\mathcal{M}$ be a mechanism satisfying $(\varepsilon, \delta)$-differential privacy where one stream of a neighboring pair of streams is obtained by removing $1$ element from the other.
    Then $\mathcal{M}$ satisfies $(m\varepsilon, m e^{m\varepsilon} \delta)$-differential privacy for neighboring streams where one stream is obtained by removing at most $m$ elements from (or adding at most $m$ elements to) the other stream.
\end{lemma}

This allows us to upper bound the privacy parameters when streams differ by multiple elements.

\begin{lemma}
    \label{lem:PMG-multiple-items} 
    Let $\hat{\x}$ be the flattened version of the stream $\x$.
    Then releasing $\PMG(k, \hat{\x})$ with parameters $\varepsilon = \varepsilon'/m$ and $\delta = \delta'/(m e^{\varepsilon'})$ satisfies $(\varepsilon', \delta')$-differential privacy.
\end{lemma}

\begin{proof}
    $\PMG(k, \hat{\x})$ satisfies $(\varepsilon, \delta)$-differential privacy for streams differing in a single element by Lemma~\ref{lem:PMG-privacy}.
    For any neighboring streams $\x \sim \x'$ we have that $\hat{\x}$ can be created by removing at most $m$ elements from $\hat{\x}'$, or vice versa.
    It follows from Lemma~\ref{lem:group-privacy} that $\PMG(k, \hat{\x})$ satisfies $(\varepsilon', \delta')$-differential privacy as $m\varepsilon=\varepsilon'$ and $me^{m\varepsilon}\delta=me^{\varepsilon'}\delta=\delta'$ by definition.
\end{proof}

Since the proof above does not depend on the fact that the elements of a user are distinct or appear in consecutive order the algorithm can be used in an even more general setting.

\begin{corollary}
    \label{cor:PMG-multiple-items-general-setting}
    Algorithm~\ref{alg:PMG} with the parameters of Lemma~\ref{lem:PMG-multiple-items} satisfies $(\varepsilon', \delta')$-differential privacy under any definition of neighboring streams where one stream is obtained from the other by adding or removing up to $m$ (possibly duplicate) elements. 
\end{corollary}

We can use a similar argument to achieve pure differential privacy when streams differ by multiple elements using our technique from Section~\ref{sec:pure}.

\begin{lemma}
    \label{lem:pure-dp-user-level}
    Let $\x \in \Xsingle$
    be a stream of elements where a neighboring stream $\x'$ is obtained from $\x$ by adding or removing up to $m$ (possibly duplicate) elements.
    If we compute a MG sketch of $\x$ followed by the post-processing step from Section~\ref{sec:pure}, we can release the sketch under $\varepsilon$-DP using the technique of \cite{DP-continual-heavy-hitters} when noise is sampled from $\Lap{2m/\varepsilon}$.
\end{lemma}

\begin{proof}
    It follows from Lemma~\ref{lem:group-privacy} since the mechanism satisfies $(\varepsilon / m)$-differential privacy for streams differing by $1$ element because the $\ell_1$-sensitivity is bounded by $2$ as discussed in Section~\ref{sec:pure}.
\end{proof}

We cannot hope to avoid scaling the noise linearly with $m$
in the generalized setting of Corollary~\ref{cor:PMG-multiple-items-general-setting} and Lemma~\ref{lem:pure-dp-user-level} where users can contribute duplicate elements to the stream. 
The reason is that any frequency can differ by $m$ between neighboring streams if a user contributes only identical elements.
We instead focus on the setting where elements are distinct.
In this setting, additional improvements are possible under $(\varepsilon,\delta)$-DP because the $\ell_2$ sensitivity is smaller, as we discuss below.

Algorithm~\ref{alg:PMG} and the mechanism by \cite{DP-continual-heavy-hitters} relies on noise from the Laplace distribution. 
This choice works well for the settings of Sections~\ref{sec:alg} and~\ref{sec:pure}. 
However, when users have multiple distinct items we can instead use Gaussian noise under $(\varepsilon, \delta)$-differential privacy. 
The magnitude of Gaussian noise required for differential privacy scales with the $\ell_2$-sensitivity rather than the $\ell_1$-sensitivity which is typically used for Laplace noise.
If we consider the task of releasing $f(x)$ for all $x \in \Universe$ without memory constraints it is clear that a user changes at most $m$ distinct counts by $1$ each.
The $\ell_1$-sensitivity of $f$ is therefore $m$, but the $\ell_2$-sensitivity is only $\sqrt{m}$. 


We next discuss a general mechanism for releasing sketches using Gaussian noise.
The mechanism is similar to the technique with Laplace noise by~\cite{DP-approx-sparse-hist}.
The idea is to add noise from a Gaussian distribution to all non-zeroes counters and then remove small noisy counts.
This is known as the Gaussian Sparse Histogram Mechanism. 
Karrer, Kifer, Wilkins, and Zhang~\cite{GaussianSparseHistogramMechanism} gave an exact analysis of the parameters required for the mechanism to satisfy $(\varepsilon, \delta)$-DP.
They consider a more general version of the mechanism, and we restate their result for our setting in Theorem~\ref{thm:sparse-gaussian}.

\begin{theorem}[{Gaussian Sparse Histogram Mechanism (GSHM) - Following~\cite[Theorem~5.4]{GaussianSparseHistogramMechanism}}]
    \label{thm:sparse-gaussian}
    Let $T, c$ and $T', c'$ be frequency sketches for streams $\x$ and $\x'$, such that for any neighboring pair $\x \sim \x'$, the counters $c$ and $c'$ differ for at most $l$ counts and agree on the count for all other elements.
    The differing counts are either (a) all $1$ higher in $c$ than in $c'$ or (b) all $1$ lower in $c$ than in $c'$.
    Then the Gaussian Sparse Histogram Mechanism, denoted $\gaussthreshold(T, c, l, \sigma, \tau)$, adds noise from $\mathcal{N}(0, \sigma^2)$ independently to each non-zero count in $c$ and removes all noisy counts below the threshold $1 + \tau$.
    The mechanism satisfies $(\varepsilon, \delta)$-differential privacy if and only if the following inequality holds
    \begin{align*}
    \delta \geq 
    \max \bigg[ &1 - \Phi \left( \frac{\tau}{\sigma} \right)^l, \\
    &\max_{j \in [l]} 1 - \Phi \left( \frac{\tau}{\sigma} \right)^{l - j} + \Phi \left( \frac{\tau}{\sigma} \right)^{l - j} \left[ \Phi\left(\frac{\sqrt{j}}{2\sigma} - \frac{(\varepsilon - \gamma)\sigma}{\sqrt{j}}\right) - e^{\varepsilon - \gamma} \Phi\left(-\frac{\sqrt{j}}{2\sigma} - \frac{(\varepsilon - \gamma)\sigma}{\sqrt{j}}\right) \right], \\
    &\max_{j \in [l]} \Phi\left(\frac{\sqrt{j}}{2\sigma} - \frac{(\varepsilon + \gamma)\sigma}{\sqrt{j}}\right) - e^{\varepsilon + \gamma} \Phi\left(-\frac{\sqrt{j}}{2\sigma} - \frac{(\varepsilon + \gamma)\sigma}{\sqrt{j}}\right)
    \bigg] ,
    \end{align*}
    where $\gamma = (l - j)\log \Phi \left(\frac{\tau}{\sigma}\right)$.
\end{theorem}

The inequality of Theorem~\ref{thm:sparse-gaussian} gives the tightest results, but it can be difficult to parse. 
We present simpler parameters for the Gaussian Sparse Histogram Mechanism below.
Note that our analysis is far from tight and any deployment of the GSHM should preferably set parameters using the exact analysis presented in Theorem~\ref{thm:sparse-gaussian}.
We present the version with worse parameters for presentation purposes only as it is easier to read off the asymptotic behavior of the mechanism.


\begin{lemma}
    \label{lem:gaussian-sparse-easy-version}
    Let $T, c$, $T', c'$, and $\gaussthreshold(T, c, l, \sigma, \tau)$ all be defined as in Theorem~\ref{thm:sparse-gaussian}. 
    $\gaussthreshold(T, c, l, \sigma, \tau)$ satisfies $(\varepsilon, \delta)$-differential privacy when $\varepsilon < 1$ for $\sigma = \sqrt{l 2 \ln(2.5/\delta)}/\varepsilon$ and $\tau = \sqrt{2\ln(2l/\delta)} \sigma$.
\end{lemma}

\begin{proof}
    We use a proof similar to that currently used for the Google Differential Privacy library~\cite{googlelibthreshold} where the budget for $\delta$ is split between the noise and the threshold.
    Let $E_0$ denote the event where only counters that are non-zero in both sketches are above the threshold.
    We use $\mathcal{M}(x)$ and $\mathcal{M}(x')$ to denote $\gaussthreshold(T, c, l, \sigma, \tau)$ and $\gaussthreshold(T', c', l, \sigma, \tau)$, respectively. We have that
    \begin{align*}
    \Pr[\mathcal{M}(x) \in Z] &= \Pr[\mathcal{M}(x) \in Z \vert E_0] \Pr[E_0] + \Pr[\mathcal{M}(x) \in Z \vert \neg E_0] \Pr[\neg E_0] \\
    &\leq \Pr[\mathcal{M}(x) \in Z \vert E_0] \Pr[E_0] + \delta/2 \\
    &\leq (e^\varepsilon \Pr[\mathcal{M}(x') \in Z \vert E_0] + \delta/2) \Pr[E_0] + \delta/2 \\
    &\leq e^\varepsilon \Pr[\mathcal{M}(x') \in Z \text{ and } E_0] + \delta/2 + \delta/2 \\
    &\leq e^\varepsilon \Pr[\mathcal{M}(x') \in Z] + \delta \enspace .
    \end{align*}
    The first inequality follows from a union bound and the fact that $\Pr[\mathcal{N}(0, \sigma^2) > t] \leq e^{-(t/\sigma)^2/2}$ for any positive $t$.
    The second inequality follows from \cite[Theorem A.1]{DworkRothBook} since $\Delta_2 = \sqrt{l}$.
\end{proof}

As seen above the noise and threshold for the Gaussian Sparse Histogram Mechanism scales with the $\ell_2$-sensitivity rather than the $\ell_1$-sensitivity. 
Adding or removing a user to or from the stream only changes the true frequency of any $x \in \mathcal{U}$ by $1$.
As such the $\ell_2$ distance between the frequencies for any pair of neighboring streams $\x \sim \x'$ is $\sqrt{\sum_{x \in \mathcal{U}} (f(x) - f(x'))^2} \leq \sqrt{m}$.
We might hope that the $\ell_2$-sensitivity of the MG sketch is low, or perhaps that we can use a technique similar to Algorithm~\ref{alg:PMG} with Gaussian noise.
Alternatively, we could remove small values using the post-processing from Section~\ref{sec:pure} if that guaranteed low $\ell_2$-sensitivity.
Unfortunately, these techniques cannot be used to reduce the dependence on $m$ because there exist neighboring streams where a single counter differs by $m$. 
That statement holds even if we perform the post-processing from Section~\ref{sec:pure}.

\begin{lemma}
    \label{lem:l2sens-MG-sketch}
    Let $T, c$ and $T', c'$ denote MG sketches of size $k$ for a pair of neighboring streams $\x \sim \x'$.
    There exist neighboring streams such that $c_x - c'_x = m$ for some element $x \in \Universe$.
\end{lemma}

\begin{proof}
    There are many such pairs of streams, here we give an example of a pair where all counters for elements other than $x$ are zero in both sketches.
    Let $\x_{k + 1}$ be the user that is removed from $\x$ to obtain $\x'$.
    We construct $(\x_1, \x_2, \dots, \x_{k-1}, \x_{k})$ such that it contains exactly $m$ copies of $k$ distinct elements that does not include $x$.
    This is always possible for any $m \leq k$ by cycling through the same order of $k$ elements and taking $m$ elements at a time. 
    E.g, for $m=2$ and items $\{1, 2, 3\}$ the stream could be $(\{1, 2\}, \{3, 1\}, \{2, 3\})$.
    If $\x_{k + 1}$ contains $m$ elements that are not in the sketch all counters are decremented and the sketch for $\x$ is empty after processing $\x_{k + 1}$. 
    Now, the rest of the stream contains only copies of the singleton $\{x\}$.
    After processing $\x_{k + 1 + m}$ the sketch for $\x$ has a single counter $c_x = m$ while the sketch for $\x'$ is empty.
    Furthermore, if we extend the streams further with more copies of $\{x\}$ we have $c_x = \vert \x \vert - k - 1$ and $c'_x = \vert \x \vert - k - 1 - m$.
\end{proof}

Since there exist neighboring streams for which the MG sketch differs by $m$ for a single counter, as shown above, we must add noise with magnitude scaled to $m$ to satisfy differential privacy. 
We also cannot hope to avoid these problematic inputs using some post-processing for small counters similar to Section~\ref{sec:pure} because the counter that differs by $m$ can be arbitrarily large for sufficiently long streams.
We instead use a different sketching algorithm to remove the linear dependency on $m$. 
In Algorithm~\ref{alg:user-level-sketch} we present a novel frequency sketch.
Our sketch is similar in spirit to the MG sketch as it is a generalization of the original MG sketch for sets.
The key difference is that we always increment all counters for the elements of a user and decrement all counters at most once per user instead of once per element.

\begin{figure}[H]
\centering
\begin{minipage}{.85\linewidth}

\begin{algorithm}[H]
\caption{Privacy-Aware Misra-Gries~($\usersketch$)\label{alg:user-level-sketch}}
\SetKwInOut{Input}{Input}

\Input{Positive integer $k$ and stream $\x = (\x_1, \x_2, \dots, \x_{n-1}, \x_n)$} 

$T \leftarrow \emptyset$ \\

\ForEach{$\x_i \in \x$}
{
  \ForEach(\Comment{\makebox[6cm][l]{Increment count of each element in $\x_i$}}){$x \in \x_i$}{ \label{line:alg3-first-loop-start}
     \uIf{$x \in T$}{ 
       $c_x \leftarrow c_x + 1$
     }
     \Else{
       $T \leftarrow T \cup \{x\}$ \\
       $c_x \leftarrow 1$ \label{line:alg3-first-loop-finish}
     }
  }
 \If(\Comment{\makebox[6cm][l]{The size can temporarily be up to $k + m$}}){$\vert T \vert > k$\label{line:user-level-condition}}{
   \ForEach(\Comment{\makebox[6cm][l]{Decrement all counters}}){$x \in T$}{
    $c_x \leftarrow c_x - 1$ \label{line:user-level-sketch-dec} \\
    \If(\Comment{\makebox[6cm][l]{Remove keys for zero counters}}){$c_x = 0$}{
      $T \leftarrow T \setminus \{x\}$
    }
   }
 }
}

\Return{$T, c$}
\end{algorithm}

\end{minipage}
\end{figure}

We start our analysis by showing that the error guarantees of Algorithm~\ref{alg:user-level-sketch} match the MG sketch.

\begin{lemma}
    \label{lem:user-level-sketch-error}
    Let $\hat{f}(x)$ be the frequency estimates given by Algorithm~\ref{alg:user-level-sketch} with size $k$. 
    Then for all $x \in \mathcal{U}$, it holds that $\hat{f}(x) \in [f(x) - \floor{N/(k+1)}, f(x)]$, where $f(x)$ is the true frequency of $x$ in $\x$ and $N = \sum_{i \in [n]} \vert S_i \vert$ is the total size of the stream.
\end{lemma}

\begin{proof}
    The proof is similar to the analysis of the MG sketch by \cite{MG-max-error}.
    Notice that similar to the MG sketch, we only introduce errors when decrementing counters.
    The error of any estimate $\hat{f}(x)$ is exactly the number of times the counter for $x$ was decremented on line~\ref{line:user-level-sketch-dec}.
    As such, we can bound the maximum error of all counters by the number of times the condition on line~\ref{line:user-level-condition} evaluates to true.
    Notice that whenever the condition is true the sum of all counters is decreased by at least $k + 1$.
    Since the sum is incremented exactly $N$ times and counters are never negative this happens at most $\floor{N/(k + 1)}$ times.
\end{proof}

Next, we bound the sensitivity of Algorithm~\ref{alg:user-level-sketch}.

\begin{lemma}
    \label{lem:user-level-sketch-sensitivity}
    Let $\T,\cc \leftarrow \usersketch(k, \x)$ and $\T',\cc' \leftarrow \usersketch(k, \x')$ be the outputs of Algorithm~\ref{alg:user-level-sketch} on a pair of neighboring streams $S \sim S'$.
    Then it holds that either, (1) $\T' \subseteq \T$ and $c_i - c'_i =\{0, 1\}$ for all $i \in \T$ or (2) $\T \subseteq \T'$ and $c'_i - c_i =\{0, 1\}$ for all $i \in \T'$.
\end{lemma}

\begin{proof}
    We first show that the condition holds after processing the user that is only present in one of the streams.
    Without loss of generality consider the case where $\x'$ is obtained by removing $\x_i$ from $\x$.
    Since the two cases of the lemma are symmetric the proof is the same if $\x$ is obtained by removing some user $\x'_i$ from $\x'$.
    Since the algorithm is deterministic the sketches for the two streams are identical after processing $\x_{i - 1}$.
    The sketch for $\x$ then processes $\x_i$.
    After running the loop on Lines~\ref{line:alg3-first-loop-start}-\ref{line:alg3-first-loop-finish} the count for each element in $\x_i$ is $1$ higher compared to the sketch for $\x'$.
    If the condition on Line~\ref{line:user-level-condition} evaluates to false we finished processing $\x_i$ and the state corresponds to case $(1)$ of the lemma.
    If the condition is true we decrement all counters by $1$ and remove elements with a count of $0$.
    The state then corresponds to case $(2)$ of the lemma.

    We now show that if one of the conditions holds before processing some user $\x_j$ one of the conditions holds also after processing $\x_j$.
    This proves the lemma by induction.
    Assume that case $(1)$ of the lemma holds before processing $\x_j$.
    The proof is identical for case $(2)$ due to symmetry.
    In each iteration of the loop on lines~\ref{line:alg3-first-loop-start}-\ref{line:alg3-first-loop-finish} we increment the same counter in both sketches which does not affect the condition of case $(1)$.
    As such, we only have to consider the effect of decrementing counters.
    If we either do not decrement the counters in any sketch or we decrement the counters in both sketches the state after processing $\x_j$ still corresponds to case $(1)$.
    However, if $\vert T \vert > k \geq \vert T' \vert$ we only decrement the counters in $c$.
    The state then corresponds to case $(2)$.
    Note that we never decrement the counters in $c'$ without decrementing the counters in $c$ since we have that $\vert T \vert \geq \vert T' \vert$.
\end{proof}

As mentioned in Section~\ref{sec:merge} it is often important in practice that we can merge sketches. 
A nice property of Algorithm~\ref{alg:user-level-sketch} is that the sensitivity has the structure we used to bound the sensitivity of merged sketches.
The $\usersketch$ sketch can be seen as a special case of the merging algorithm discussed in Section~\ref{sec:merge}.
The algorithm is equivalent to computing a MG sketch for each user $T_i, c_i \leftarrow \MG(k, \x_i)$ and merging each $T_i, c_i$ with $T, c$ to obtain a sketch for the full stream.
As such, merging sketches from Algorithm~\ref{alg:user-level-sketch} does not increase sensitivity.

\begin{corollary}
    \label{cor:merged-user-level}
    Let $(S^{(1)}, \ldots, S^{(\numstreams)})$ and $(S'^{(1)}, \ldots, S'^{(\numstreams)})$ denote sets of streams such that $S^{(i)} \sim S'^{(i)}$ for one $i \in [\numstreams]$ and $S^{(j)} = S'^{(j)}$ for any $j \neq i$.
    Let $T, c$ and $T', c'$ be the result of merging $\usersketch$ sketches computed on both sets of streams in any fixed order using the merging algorithm by \cite{mergeable-summaries} as discussed in Section~\ref{sec:merge}.
    Then it holds that either, (1) $\T' \subseteq \T$ and $c_i - c'_i =\{0, 1\}$ for all $i \in \T$ or (2) $\T \subseteq \T'$ and $c'_i - c_i =\{0, 1\}$ for all $i \in \T'$.
\end{corollary}

\begin{proof}
    It holds by induction and Lemma~\ref{lem:merge-step-sensitivity}.
    The condition required by Lemma~\ref{lem:merge-step-sensitivity} holds for sketches of any pair of neighboring streams by Lemma~\ref{lem:user-level-sketch-sensitivity}.
\end{proof}

\begin{lemma}
    \label{lem:user-level-merge-error}
    Let $T, c$ be the result of merging $\usersketch$ sketches of size $k$ computed on a set of streams using the merging algorithm from Section~\ref{sec:merge}.
    Let $f(x)$ be the true frequency of $x$ across all streams and let $M$ be the total number of elements across all streams. Then for all $x \in \mathcal{U}$ it holds that
    \[
        c_x \in \left[ f(x) - \frac{M}{k + 1}, f(x) \right] \,.
    \]
\end{lemma}

\begin{proof}
    Agarwal et al.~\cite{mergeable-summaries} show that this holds for the MG sketch in Lemma~1 and Theorem~1 of their paper.
    They used the properties of the MG sketch that error is only introduced when decrementing counters, and counters for $k + 1$ distinct elements are decremented each time.
    Since at least $k + 1$ counters for distinct elements are decremented in Algorithm~\ref{alg:user-level-sketch} the lemma follows from their proof. 
\end{proof}


Finally, we are ready to summarize our results for this section in the following theorem.

\begin{theorem}
    \label{thm:user-level}
    Let $(S^{(1)}, \ldots, S^{(\numstreams)})$ and $(S'^{(1)}, \ldots, S'^{(\numstreams)})$ denote sets of streams such that $S^{(i)} \sim S'^{(i)}$ for one $i \in [\numstreams]$ and $S^{(j)} = S'^{(j)}$ for any $j \neq i$.
    Let $T, c$ and $T', c'$ be the result of merging $\usersketch$ sketches of size $k$ computed on both sets of streams in any fixed order using the merging algorithm discussed in Section~\ref{sec:merge}.
    Then releasing the output of $\gaussthreshold(T, c, k, \sigma, \tau)$ where $\sigma$ and $\tau$ are chosen according to Theorem~\ref{thm:sparse-gaussian} satisfies $(\varepsilon, \delta)$-differential privacy. 
    For any $\varepsilon < 1$ we have that $\tau = O( \sqrt{k}\ln(k/\delta)/\varepsilon )$.
    Let $\hat{f}(x)$ be the estimate of the true frequency $f(x)$ across all streams and let $M$ denote the total number of elements across all streams.
    Then we have for all $x \in \mathcal{U}$ with probability at least $1 - 2\delta$ that 
    \[
    \hat{f}(x) \in \left[ f(x) - \frac{M}{k+1} - 2\tau - 1,
    f(x) + \tau \right] \, .
    \]
\end{theorem}

\begin{proof}
    The privacy guarantees follow directly from Corollary~\ref{cor:merged-user-level} and Theorem~\ref{thm:sparse-gaussian}.
    The three sources of error are the estimation error of the non-private sketch, the noise added to each counter, and the error from removing values below the threshold.
    The error of the merged sketch is at most $\frac{M}{k + 1}$ by Lemma~\ref{lem:user-level-merge-error}.
    From the first part of the condition of Theorem~\ref{thm:sparse-gaussian} we have that $1 - \Phi(\frac{\tau}{\sigma})^k \leq \delta$.
    That is, the $k$ samples of Gaussian noise are all bounded by $\tau$ with probability at least $1 - \delta$.
    Since the noise is symmetric we can use a union bound to bound the absolute value of the noise by $\tau$ with probability at least $1 - 2\delta$.
    Removing noisy counters below the thresholding adds error at most $1 + \tau$.
    The asymptotic behavior of $\tau$ for $\varepsilon < 1$ follows from the looser bound in Lemma~\ref{lem:gaussian-sparse-easy-version}.
\end{proof}

\section{Open Problem}
\label{sec:open-problems}

In Section~\ref{sec:user-level} we introduced a frequency sketch for a stream of users each with up to $m$ distinct elements.
Our sketch has error guarantees similar to the MG sketch and the $\ell_2$-sensitivity of the sketch with size $k$ is $\sqrt{k}$.
The main open problem that we leave is if there exists a sketch with similar error guarantees and $\ell_2$-sensitivity of $\sqrt{m}$ or $O\left(\sqrt{m}\right)$.
This would allow us to add noise to the sketch with magnitude matching the non-streaming setting.
More specifically, we ideally want a sketch with all the properties listed below for the setting of Section~\ref{sec:user-level}.

\textsc{Open Problem.} \textit{
Is it possible to design a sketch with the following properties (1) the sketch uses $O(k)$ space (2) the sketch (non-privately) returns a frequency oracle with error at most $N/(k+1)$ for any stream (3) there exists an $(\varepsilon, \delta)$-DP mechanism that releases the sketch with error $O(\sqrt{m}\log(1/\delta)/\varepsilon)$?
}

The $\ell_2$-sensitivity of our sketch exceeds $O\left(\sqrt{m}\right)$ because the number of elements that are decremented when the sketch is full can differ between neighboring sketches.
Any counter-based sketch with low sensitivity likely must ensure that the number of decremented counters remains stable.
During this project we experimented with variants that always decremented a fixed number of elements when full.
Unfortunately, those attempts yielded higher sensitivity that the PAMG as the sketches did not have the property that all counts differ by at most $1$ between neighboring streams.

\section*{Acknowledgment}
\label{sec:acknowledgement}
This work was supported by the VILLUM Foundation grant 16582.
We thank Rasmus Pagh for suggesting this problem and for helpful discussions. We thank Martin Aumüller for his valuable feedback.
Finally, we thank anonymous reviewers for suggestions that improved presentation.

\bibliographystyle{abbrv} 
\bibliography{literature}  

\begin{thebibliography}{10}

\bibitem{mergeable-summaries}
P.~K. Agarwal, G.~Cormode, Z.~Huang, J.~M. Phillips, Z.~Wei, and K.~Yi.
\newblock Mergeable summaries.
\newblock {\em ACM Trans. Database Syst.}, 38(4), dec 2013.

\bibitem{apple}
Apple.
\newblock Differential privacy technical overview.
\newblock Retrieved March 6, 2025 from \url{https://www.apple.com/privacy/docs/Differential_Privacy_Overview.pdf}.

\bibitem{DP-ALP}
M.~Aum{\"{u}}ller, C.~J. Lebeda, and R.~Pagh.
\newblock Representing sparse vectors with differential privacy, low error, optimal space, and fast access.
\newblock {\em Journal of Privacy and Confidentiality}, 12(2), Nov. 2022.

\bibitem{DP-finite-computers}
V.~Balcer and S.~Vadhan.
\newblock Differential privacy on finite computers.
\newblock {\em Journal of Privacy and Confidentiality}, 9(2), Sep. 2019.

\bibitem{bassily2017practical}
R.~Bassily, K.~Nissim, U.~Stemmer, and A.~Guha~Thakurta.
\newblock Practical locally private heavy hitters.
\newblock {\em Advances in Neural Information Processing Systems}, 30, 2017.

\bibitem{blocki2022make}
J.~Blocki, E.~Grigorescu, T.~Mukherjee, and S.~Zhou.
\newblock How to make your approximation algorithm private: A black-box differentially-private transformation for tunable approximation algorithms of functions with low sensitivity.
\newblock {\em arXiv preprint arXiv:2210.03831}, 2022.

\bibitem{multiparty-misra-gries}
J.~B{\"{o}}hler and F.~Kerschbaum.
\newblock Secure multi-party computation of differentially private heavy hitters.
\newblock In Y.~Kim, J.~Kim, G.~Vigna, and E.~Shi, editors, {\em {CCS} '21: 2021 {ACM} {SIGSAC} Conference on Computer and Communications Security, Virtual Event, Republic of Korea, November 15 - 19, 2021}, pages 2361--2377. {ACM}, 2021.

\bibitem{MG-max-error}
P.~Bose, E.~Kranakis, P.~Morin, and Y.~Tang.
\newblock Bounds for frequency estimation of packet streams.
\newblock In J.~F. Sibeyn, editor, {\em {SIROCCO} 10: Proceedings of the 10th Internaltional Colloquium on Structural Information Complexity, June 18-20, 2003, Ume{\aa} Sweden}, volume~17 of {\em Proceedings in Informatics}, pages 33--42. Carleton Scientific, 2003.

\bibitem{bun2019heavy}
M.~Bun, J.~Nelson, and U.~Stemmer.
\newblock Heavy hitters and the structure of local privacy.
\newblock {\em ACM Transactions on Algorithms (TALG)}, 15(4):1--40, 2019.

\bibitem{carvalho2020differentially}
R.~S. Carvalho, K.~Wang, L.~Gondara, and C.~Miao.
\newblock Differentially private top-k selection via stability on unknown domain.
\newblock In {\em Conference on Uncertainty in Artificial Intelligence}, pages 1109--1118. PMLR, 2020.

\bibitem{DP-continual-heavy-hitters}
T.-H.~H. Chan, M.~Li, E.~Shi, and W.~Xu.
\newblock Differentially private continual monitoring of heavy hitters from distributed streams.
\newblock In {\em International Symposium on Privacy Enhancing Technologies Symposium}, pages 140--159. Springer, 2012.

\bibitem{DP-pure-sparse-hist}
G.~Cormode, C.~M. Procopiuc, D.~Srivastava, and T.~T.~L. Tran.
\newblock Differentially private summaries for sparse data.
\newblock In {\em {ICDT}}, pages 299--311. {ACM}, 2012.

\bibitem{durfee2019practical}
D.~Durfee and R.~M. Rogers.
\newblock Practical differentially private top-k selection with pay-what-you-get composition.
\newblock {\em Advances in Neural Information Processing Systems}, 32, 2019.

\bibitem{dworkCalibratingNoise}
C.~Dwork, F.~McSherry, K.~Nissim, and A.~Smith.
\newblock Calibrating noise to sensitivity in private data analysis.
\newblock In {\em Theory of Cryptography: Third Theory of Cryptography Conference, TCC 2006, New York, NY, USA, March 4-7, 2006. Proceedings 3}, pages 265--284. Springer, 2006.

\bibitem{DworkRothBook}
C.~Dwork and A.~Roth.
\newblock The algorithmic foundations of differential privacy.
\newblock {\em Foundations and Trends in Theoretical Computer Science}, 9(3-4):211--407, 2014.

\bibitem{erlingsson2014rappor}
{\'U}.~Erlingsson, V.~Pihur, and A.~Korolova.
\newblock Rappor: Randomized aggregatable privacy-preserving ordinal response.
\newblock In {\em Proceedings of the 2014 ACM SIGSAC conference on computer and communications security}, pages 1054--1067, 2014.

\bibitem{DP-staircase-mechanism}
Q.~Geng, P.~Kairouz, S.~Oh, and P.~Viswanath.
\newblock The staircase mechanism in differential privacy.
\newblock {\em IEEE Journal of Selected Topics in Signal Processing}, 9(7):1176--1184, 2015.

\bibitem{DP-Google-shuffled-model}
B.~Ghazi, N.~Golowich, R.~Kumar, R.~Pagh, and A.~Velingker.
\newblock On the power of multiple anonymous messages.
\newblock {\em {IACR} Cryptol. ePrint Arch.}, page 1382, 2019.

\bibitem{DP-geometric-mechanism}
A.~Ghosh, T.~Roughgarden, and M.~Sundararajan.
\newblock Universally utility-maximizing privacy mechanisms.
\newblock {\em SIAM Journal on Computing}, 41(6):1673--1693, 2012.

\bibitem{googlelibthreshold}
{\relax Google Anonymization Team}.
\newblock Delta for thresholding, 2020.
\newblock Retrieved March 6, 2025 from \url{https://github.com/google/differential-privacy/blob/main/common_docs/Delta_For_Thresholding.pdf}.

\bibitem{DP-floating-point-attacks}
S.~Haney, D.~Desfontaines, L.~Hartman, R.~Shrestha, and M.~Hay.
\newblock Precision-based attacks and interval refining: how to break, then fix, differential privacy on finite computers.
\newblock {\em CoRR}, abs/2207.13793, 2022.

\bibitem{DP-approx-sparse-hist}
A.~Korolova, K.~Kenthapadi, N.~Mishra, and A.~Ntoulas.
\newblock Releasing search queries and clicks privately.
\newblock In {\em {WWW}}, pages 171--180. {ACM}, 2009.

\bibitem{mir2011pan}
D.~J. Mir, S.~Muthukrishnan, A.~Nikolov, and R.~N. Wright.
\newblock Pan-private algorithms via statistics on sketches.
\newblock In M.~Lenzerini and T.~Schwentick, editors, {\em Proceedings of the 30th {ACM} {SIGMOD-SIGACT-SIGART} Symposium on Principles of Database Systems, {PODS} 2011, June 12-16, 2011, Athens, Greece}, pages 37--48. {ACM}, 2011.

\bibitem{Misra-Gries}
J.~Misra and D.~Gries.
\newblock Finding repeated elements.
\newblock {\em Science of Computer Programming}, 2(2):143--152, 1982.

\bibitem{DP-improved-count-sketch}
R.~Pagh and M.~Thorup.
\newblock Improved utility analysis of private countsketch.
\newblock In {\em Advances in Neural Information Processing Systems}, volume~35, pages 25631--25643, 2022.

\bibitem{qiao2021oneshot}
G.~Qiao, W.~Su, and L.~Zhang.
\newblock Oneshot differentially private top-k selection.
\newblock In {\em International Conference on Machine Learning}, pages 8672--8681. PMLR, 2021.

\bibitem{qin2016heavy}
Z.~Qin, Y.~Yang, T.~Yu, I.~Khalil, X.~Xiao, and K.~Ren.
\newblock Heavy hitter estimation over set-valued data with local differential privacy.
\newblock In {\em Proceedings of the 2016 ACM SIGSAC Conference on Computer and Communications Security}, pages 192--203, 2016.

\bibitem{tvetek2022additive}
J.~Tetek.
\newblock Additive noise mechanisms for making randomized approximation algorithms differentially private.
\newblock In {\em {APPROX/RANDOM}}, volume 317 of {\em LIPIcs}, pages 73:1--73:20. Schloss Dagstuhl - Leibniz-Zentrum f{\"{u}}r Informatik, 2024.

\bibitem{wang2019locally}
T.~Wang, N.~Li, and S.~Jha.
\newblock Locally differentially private heavy hitter identification.
\newblock {\em IEEE Transactions on Dependable and Secure Computing}, 18(2):982--993, 2019.

\bibitem{GaussianSparseHistogramMechanism}
A.~Wilkins, D.~Kifer, D.~Zhang, and B.~Karrer.
\newblock Exact privacy analysis of the gaussian sparse histogram mechanism.
\newblock {\em Journal of Privacy and Confidentiality}, 14(1), Feb. 2024.

\bibitem{wu2022asymptotically}
H.~Wu and A.~Wirth.
\newblock Asymptotically optimal locally private heavy hitters via parameterized sketches.
\newblock In {\em International Conference on Artificial Intelligence and Statistics}, pages 7766--7798. PMLR, 2022.

\bibitem{zhao2022efficient}
D.~Zhao, S.~Zhao, H.~Chen, R.~Liu, C.~Li, and W.~Liang.
\newblock Efficient protocols for heavy hitter identification with local differential privacy.
\newblock {\em Frontiers of Computer Science}, 16(5):1--11, 2022.

\bibitem{DP-linear-sketches}
F.~Zhao, D.~Qiao, R.~Redberg, D.~Agrawal, A.~El~Abbadi, and Y.-X. Wang.
\newblock Differentially private linear sketches: Efficient implementations and applications.
\newblock In {\em Advances in Neural Information Processing Systems}, volume~35, pages 12691--12704, 2022.

\bibitem{zhu2020federated}
W.~Zhu, P.~Kairouz, B.~McMahan, H.~Sun, and W.~Li.
\newblock Federated heavy hitters discovery with differential privacy.
\newblock In {\em International Conference on Artificial Intelligence and Statistics}, pages 3837--3847. PMLR, 2020.

\end{thebibliography}
%
\end{document}